\newtheorem{definition}{Definition}[section]
\newtheorem{lemma}{Lemma}[section]
\newtheorem{theorem}{Theorem}[section]
\newcommand\blfootnote[1]{%
  \begingroup
  \renewcommand\thefootnote{}\footnote{#1}%
  \addtocounter{footnote}{-1}%
  \endgroup
}
\begin{document}

\title{Local Routing in a Tree Metric 1-Spanner}

\author{Milutin Brankovic \and 
Joachim Gudmundsson\and 
Andr\'e van Renssen}

\date{}

\maketitle            
\begin{abstract}
 Solomon and Elkin \cite{tree-spanner} constructed a shortcutting scheme for weighted trees which results in a 1-spanner for the tree metric induced by the input tree. The spanner has logarithmic lightness, logarithmic diameter, a linear number of edges and bounded degree (provided the input tree has bounded degree). This spanner has been applied in a series of papers devoted to designing bounded degree, low-diameter, low-weight $(1+\epsilon)$-spanners in Euclidean and doubling metrics. In this paper, we present a simple local routing algorithm for this tree metric spanner. The algorithm has a routing ratio of 1, is guaranteed to terminate after $O(\log n)$ hops and requires $O(\Delta \log n)$ bits of storage per vertex where $\Delta$ is the maximum degree of the tree on which the spanner is constructed. This local routing algorithm can be adapted to a local routing algorithm for a doubling metric spanner which makes use of the shortcutting scheme.
 \blfootnote{A preliminary version of this paper appeared in the proceedings of COCOON'20 \cite{cocoon-tree-metric}.}

\end{abstract}
\section{Introduction}
Let $T$ be a weighted tree. The tree metric induced by $T$, denoted $M_T$, is the complete graph on the vertices of $T$ where the weight of each edge $(u, v)$ is the weight of the path connecting $u$ and $v$ in $T$. For $t\ge 1$, a $t$-spanner for a metric $(V, d)$ is a subgraph $H$ of the complete graph on $V$ such that every pair of distinct points $u, v\in V$ is connected by a path in $H$ of total weight at most $t\cdot d(u, v)$. We refer to such paths as $t$-spanner paths. A $t$-spanner has diameter $\Lambda$ if every pair of points is connected by a $t$-spanner path consisting of at most $\Lambda$ edges. Typically, $t$-spanners are designed to be sparse, often with a linear number of edges. The lightness of a graph is the ratio of its weight to the weight of its minimum spanning tree. Solomon and Elkin \cite{tree-spanner} define a 1-spanner for tree metrics. Given an $n$ vertex weighted tree of maximum degree $\Delta$, the 1-spanner has $O(n)$ edges, $O(\log n)$ diameter, $O(\log n)$ lightness and maximum degree bounded by $\Delta + O(1)$. While being an interesting construction in its own right, this tree metric 1-spanner has been used in a series of papers as a tool for reducing the diameter of various Euclidean and doubling metric spanner constructions \cite{short-thin-lanky,doubling-spanner,optimal-euclidean-spanners,tree-spanner,partitions-to-covers}.
 
 Once a spanner has been constructed, it becomes important to find these short paths efficiently. A local routing algorithm for a weighted graph $G$ is a method by which a message can be sent from any vertex in $G$ to a given destination vertex. The successor to each vertex $u$ on the path traversed by the routing algorithm must be determined using only knowledge of the destination vertex, the neighbourhood of $u$ and possibly some extra information stored at $u$. The efficiency of a routing algorithm is measured by the distance a message needs to travel through a network before reaching its destination as well as by the storage requirements for each vertex. There is a great deal of work on local routing algorithms in the literature. The difficulty of designing a good local routing algorithm clearly depends on the properties of the underlying network. Some authors have designed algorithms for very general classes of networks. For example, the algorithm of Chan et al. \cite{heirarchical-routing-in-doubling-metrics} works in any network although its quality depends on the induced doubling dimension. Many authors have focused on highly efficient algorithms for specific classes of networks. For example, there has been a line of research into routing algorithms for various classes of planar graphs \cite{convex-subdivisions,routing-in-delaunay}. Support for efficient local routing algorithms is a desirable feature in a spanner and in some recent papers, researchers have designed spanners which simultaneously achieve support for efficient local routing with other properties. See, for example, the work of Ashvinkumar et al. \cite{local-routing-in-sparse-and-light}. 
 
 There appears to be little work in the literature in designing spanners which achieve both support for local routing as well as low diameter. Given a spanner with diameter $\Lambda$, it is natural to require that a local routing algorithm on this spanner match this diameter, i.e., the algorithm should be guaranteed to terminate after at most $\Lambda$ hops. Abraham and Malkhi \cite{compact-routing-in-euclidean-metrics} give a construction, for any $\epsilon > 0$, of a $(1+\epsilon)$-spanner for points in two dimensional Euclidean space with an accompanying routing algorithm. The diameter of the routing algorithm is $O(\log D)$ with high probability where $D$ is the quotient of the largest and smallest inter-point distances. The routing algorithm has routing ratio $O(\log n)$, with high probability, for $n$ points on a uniform grid.
 
 In this paper, we demonstrate that the 1-spanner construction of Solomon and Elkin \cite{tree-spanner} supports a local routing algorithm with $O(\log n)$ diameter in the worst case. In Section~\ref{app:doubling}, we show that this routing algorithm can be adapted to a class of doubling metric $(1+\epsilon)$-spanners which employ the 1-spanner construction to achieve low diameter while retaining low weight and low degree.

 \section{The Model}
 
 A local routing algorithm for a weighted graph $G$ is a distributed algorithm in which each vertex is an independent processor. At the beginning of a round, a node may find that it has received a message. If a message is received, the algorithm decides to which neighbour the message should be forwarded. The following information is available to each vertex $v$ in $G$:
 
 \begin{enumerate}
     \item A header contained in the message.
     \item The label of $v$ as well as labels of neighbouring vertices.
 \end{enumerate}
 
 The message header stores the label of the destination vertex as well as other information if required by the routing algorithm. The labels are used not only as unique identifiers of vertices but may also store additional information if required. The labels of vertices are computed in a pre-processing step before the algorithm is run. Our model does not consider the running time of computation performed at each vertex in each round and so we do not specify the type of data structure used for headers and labels. The routing decision made at each vertex is deterministic. A routing algorithm is evaluated on the basis of the following quality measures.
 
 \begin{itemize}
     \item \textit{Routing Ratio}. Given two vertices $u$, $v$ of $G$, let $d_G(u, v)$ denote the shortest path distance from $u$ to $v$ and let $d_{route}(u, v)$ denote the total length of the path traversed by the routing algorithm when routing from $u$ to $v$. The routing ratio is defined to be $\max_{u,v \in V} \left\{ \frac{d_{route}(u, v)}{d_G(u, v)}\right\}$.
     
     \item \textit{Diameter}. A routing algorithm is said to have diameter $\Lambda$ if a message is guaranteed to reach its destination after traversing at most $\Lambda$ edges.
     
     \item \textit{Storage}. A bound on the number of bits stored at vertices and in message headers.
 \end{itemize}

\section{Local Routing in Tree Metrics}
In this section we present a slightly modified version of the tree metric 1-spanner of Solomon and Elkin \cite{tree-spanner}. We then present our routing algorithm for this spanner and show that it has a routing ratio of 1.
\subsection{The Spanner}
In this section we describe the tree metric 1-spanner construction of Solomon and Elkin \cite{tree-spanner}.

We first define some notation used in this section. $T$ will denote a weighted, rooted tree and $wt(T)$ will denote its weight. Given a graph $G$, $V(G)$ and $E(G)$ denote its vertex and edge sets respectively. The root of $T$ is denoted $rt(T)$ and $ch(v)$ denotes the number of children of $v$. As in the notation of \cite{tree-spanner}, the children of a vertex $v$ are denoted $c_1(v),...,c_{ch(v)}(v)$ and $p(v)$ denotes the parent of a vertex $v$. The lowest common ancestor of two vertices $u,v\in V(T)$ will be denoted by $lca(u, v)$. Given $u, v\in V(T)$, $P(u, v)$ denotes the unique path from $u$ to $v$ in $T$. Given $v\in V(T)$, $T_v$ will denote the subtree of $T$ rooted at $v$.

	The shortcutting procedure selects a constant number of cut vertices in the tree whose removal results in a forest of trees which are at most a constant fraction of the size of the input tree. The spanner is obtained by building the complete graph on these cut vertices and recursively applying the procedure to all subtrees obtained by removing the cut vertices. We note that the original construction appearing due to Solomon and Elkin \cite{tree-spanner} actually builds a low diameter spanner on the cut vertices rather than the complete graph.
	
	We first outline the method by which the cut vertices are selected. 
	We assume that among all subtrees rooted at children of a vertex $v$, the subtree rooted at the leftmost vertex is the largest. That is, $|T_{c_1(v)}| \ge |T_{c_i(v)}|$, for all $i > 1$.
	
	For an integer $d$, we call a vertex $v$ $d$-balanced if $|T_{c_1(v)}| \le |T| - d$.
	We label an edge $(u, v)$ of $T$ as leftmost if $u=c_1(v)$ or $v=c_1(u)$. Let $P(v)$ denote the path of maximum length from $v$ to some descendant of $v$ which includes only leftmost edges. We say the last vertex on $P(v)$ is the leftmost vertex in $T_v$ and we denote it by $l(v)$. We define $l(T):=l(rt(T))$. The construction we describe involves taking subtrees of an input tree. These subtrees inherit the `leftmost' labelling of the input tree so it may be the case that $l(T)=rt(T)$. If there is a $d$-balanced vertex along $P(v)$, we denote the first such vertex by $b_d(v)$. Otherwise, $b_d(v)=NULL$.

	Given a rooted tree $T$ and a positive integer $d$, we define a set of vertices $CV(T, d)$ as follows. Set $v:=rt(T)$. If $b_d(v)=NULL$, $CV(T, d)$ is defined to be $\emptyset$.
	Otherwise, $$CV(T, d):=\{b_d(v)\}\cup\left(\bigcup_{i=1}^{ch(b_d(v))}CV(T_{c_i(b_d(v))}, d)\right).$$
	
	Let $k$ be a positive integer. We define a set of vertices 
    $$
        C_{T} := 
        \begin{cases} 
                V(T) \mbox{ if } k \ge n/2 - 1, \\
                CV(T, d)\cup \{l(T), rt(T)\}  \mbox{ otherwise.}
        \end{cases}
    $$
    	
	where $d:=n/k$. (See Figure~\ref{fig:canonical-forest} for an example.) The spanner is constructed via the following recursive procedure which takes as input a tree $T$ and an integer parameter $k \ge 4$. Initialize the spanner as $G=T$. Compute the set $C_{T}$, with respect to $k$, and add the edges of the complete graph on $C_{T}$ to $G$. Denote by $T\setminus C_{T}$ the forest obtained by removing the vertices in $C_{T}$, along with all incident edges, from $T$. Recursively run the algorithm on all trees in the forest $T\setminus C_{T}$ and add the resulting edges to $G$. Note that the parameter $k$ is passed down to recursive calls of the algorithm while the parameter $d$ is recomputed based on the size of the subtree on which the algorithm is called. The following lemmas are established by Solomon and Elkin \cite{tree-spanner}:
	
	\begin{lemma}\label{size-of-cut-set}
	For $k\ge2$, the set $C_T$ contains at most $k+1$ vertices.
	\end{lemma}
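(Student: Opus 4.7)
The plan is to case on the two branches of the definition of $C_T$. In the first branch, $k \ge n/2 - 1$ and the statement follows directly from the size constraint imposed by this regime. In the main branch, $d = n/k$ and $C_T = CV(T, d) \cup \{l(T), rt(T)\}$, so it suffices to show $|CV(T, d)| \le k - 1$; then $|C_T| \le (k-1) + 2 = k + 1$. I would prove this inequality by strong induction on $|T|$.

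In the base case $b := b_d(rt(T)) = NULL$ (which includes the regime $|T| < d$), so $CV(T, d) = \emptyset$ and the bound is trivial. In the inductive step, $CV(T, d) = \{b\} \cup \bigcup_i CV(T_{c_i(b)}, d)$, and I would leverage three structural facts: (i) $|T_{c_1(b)}| \le |T| - d$ by $d$-balance of $b$; (ii) by the leftmost-largest assumption, $|T_{c_i(b)}| \le |T_{c_1(b)}| \le |T| - d$ for every $i$; and (iii) the child subtrees are disjoint with $\sum_i |T_{c_i(b)}| = |T_b| - 1 \le |T| - 1$. Applying the inductive hypothesis to each $T_{c_i(b)}$ and summing, together with the additional $+1$ for $b$ itself, yields the desired bound of $k - 1$ on $|CV(T, d)|$.

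The main obstacle is sharpening the amortization so that the per-level $+1$ contributed by $b$ does not accumulate past $k - 1$. The key observation is that $b$ is the first $d$-balanced vertex along $P(rt(T))$, so every strict ancestor of $b$ on that path fails to be $d$-balanced, forcing $|T_b| \ge |T| - d + 1$. Consequently, the vertices in $T \setminus T_b$ never enter any recursive call and number at most $d - 1$; they can be charged against $b$'s unit contribution, keeping the denominator tight. A case split on whether $b = rt(T)$ (maximum subtree mass, so $\sum_i |T_{c_i(b)}| = |T| - 1$) or $b \ne rt(T)$ (strictly smaller subtree mass, so extra slack propagates into the recursion) lets the induction close in both configurations, giving the required bound on $|CV(T, d)|$ and hence $|C_T| \le k + 1$.
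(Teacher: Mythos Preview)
The paper does not actually prove this lemma; it is quoted from Solomon and Elkin~\cite{tree-spanner} with no argument supplied, so there is no in-paper proof to compare your attempt against.

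On your sketch itself: the first branch does not go through. From $k \ge n/2 - 1$ you obtain only $n \le 2k + 2$, not $n \le k + 1$; for instance $n = 10$, $k = 4$ lies in this branch yet $|C_T| = |V(T)| = 10 > 5 = k+1$. So the bound does not ``follow directly from the size constraint'' as you claim. This is presumably a minor transcription slip in this paper's restatement of the construction (in the original the base-case threshold is set so that the bound does hold), but your proof should flag it rather than wave it through.

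For the main branch your structural observations are the right ones---in particular $|T_b| \ge |T| - d + 1$ because $b$ is the \emph{first} $d$-balanced vertex on $P(rt(T))$, together with $\sum_i |T_{c_i(b)}| = |T_b| - 1$, is exactly what lets the $+1$ for $b$ be absorbed. What is missing is a precisely stated inductive invariant. The recursion defining $CV(\cdot, d)$ keeps $d$ fixed while the tree shrinks, so the hypothesis must bound $|CV(T', d)|$ in terms of $|T'|$ and $d$ (something of the shape $|CV(T', d)| \le |T'|/d$), with the substitution $d = n/k$ made only at the top level. Your proposal never states such an invariant and never verifies that the arithmetic actually sums to $k-1$ across the children of $b$; the final paragraph announces that a case split ``lets the induction close'' without carrying it out. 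As it stands this is a plausible outline, not a proof.
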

	
	\begin{lemma}\label{elkin-solomon-lemma}
	Each tree in the forest $T\setminus C_T$ has size at most $2n/k$.
	\end{lemma}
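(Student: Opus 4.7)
The plan is to prove, by strong induction on $|T'|$, the stronger statement that for every subtree $T'$ encountered in the recursive $CV$ procedure (including $T$ itself), every connected component of $T' \setminus CV(T',d)$ has at most $d = n/k$ vertices. The lemma follows immediately since $C_T$ is obtained from $CV(T,d)$ by additionally cutting the two vertices $rt(T)$ and $l(T)$, and removing vertices from a forest can only refine its components.

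Set $n' := |T'|$. For the base case, suppose $b_d(rt(T'))=NULL$, so no vertex on $P(rt(T'))$ is $d$-balanced relative to $T'$. If $|T'| = 1$ the claim is trivial; otherwise, applying the non-balance condition at the parent $u$ of the leaf $l(T')$ (which lies on $P(rt(T'))$) gives $|T_{c_1(u)}| = 1 > n' - d$, so $n' \le d$, and the sole component $T'$ satisfies the bound.

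For the inductive step, let $b := b_d(rt(T'))$. If $b = rt(T')$, then $rt(T')$ is cut, and every remaining component is a component of $T'_{c_j(rt(T'))} \setminus CV(T'_{c_j(rt(T'))}, d)$ for some child index $j$, and is bounded by the inductive hypothesis. Otherwise let $u_0 = rt(T'), u_1, \ldots, u_m = b$ be the leftmost path from $rt(T')$ to $b$, with $m \ge 1$. The crucial observation is that $CV$ only recurses into subtrees rooted at children of $b$, so the non-leftmost subtrees hanging off the intermediate spine vertices $u_0, \ldots, u_{m-1}$ contain no cut vertex; hence the component $K$ containing $rt(T')$ consists precisely of these spine vertices together with all their non-leftmost subtrees. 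A telescoping sum then gives
\[
|K| \;=\; \sum_{i=0}^{m-1} \bigl(|T_{u_i}| - |T_{u_{i+1}}|\bigr) \;=\; n' - |T_b|.
\]
Because $b$ is the \emph{first} $d$-balanced vertex on $P(rt(T'))$, its parent $u_{m-1}$ is not $d$-balanced, giving $|T_b| = |T_{c_1(u_{m-1})}| > n' - d$ and hence $|K| < d$. Every remaining component is a component of $T'_{c_j(b)} \setminus CV(T'_{c_j(b)}, d)$ for some $j$, and the inductive hypothesis closes the argument.

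The main obstacle is establishing the structural claim that the non-leftmost subtrees of the intermediate spine vertices $u_0, \ldots, u_{m-1}$ are completely untouched by the $CV$ recursion; once this is clear, the telescoping identity combined with the single inequality $|T_b| > n' - d$ (coming from the fact that $b$ is the first balanced vertex) does the rest of the work.
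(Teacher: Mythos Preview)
The paper does not supply its own proof of this lemma; it is cited from Solomon and Elkin~\cite{tree-spanner}. Your argument therefore cannot be compared against anything in the present paper, but on its own terms it is correct---and in fact sharper than what is claimed: you actually show that every component has fewer than $d+1 = n/k + 1$ vertices, not merely at most $2n/k$.

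The inductive strategy is the right one. The structural claim you single out as the main obstacle---that the non-leftmost subtrees hanging off the spine $u_0,\ldots,u_{m-1}$ contain no cut vertices---is immediate from the definition of $CV$, which recurses only into the children of $b$ and never into subtrees rooted at non-leftmost children of vertices strictly above $b$. The telescoping identity $|K| = n' - |T'_b|$ together with the failure of $d$-balance at $u_{m-1}$ then gives $|K| < d$ directly, and the remaining components are handled by the inductive hypothesis applied to the strictly smaller trees $T'_{c_j(b)}$.

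One small slip: in the base case you deduce $n' \le d$ from $1 > n' - d$. Strictly this yields only $n' < d + 1$, hence $n' \le \lceil d \rceil$ when $d = n/k$ is not an integer. This does not harm the stated lemma (since $\lceil n/k \rceil \le 2n/k$ for $k \le n$), but the ``stronger statement'' you announce should be phrased as ``fewer than $d+1$ vertices'' rather than ``at most $d$ vertices'' to be literally correct throughout the induction.
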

	
	In particular, Lemma \ref{elkin-solomon-lemma} implies that the recursion depth of the spanner construction algorithm is $O(\log n)$ for $k\ge 4$.
	
	\begin{figure}
	    \centering
	    \includegraphics[width=30mm,scale=0.15]{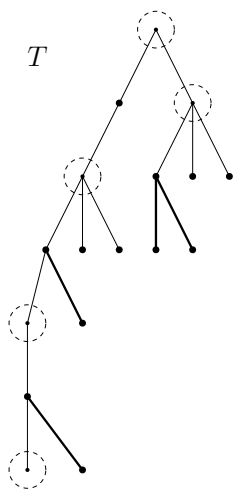}
	    \caption{A depiction of the set $C_T$ for a tree with $n=17$ vertices and the parameter $k$ set to $k=5$. The vertices of $C_T$ are inside the dashed circles. The vertices and edges of the forest $T\setminus C_{T}$  are shown in bold.} 
	    \label{fig:canonical-forest}
	\end{figure}
	
Solomon and Elkin~\cite{tree-spanner} show that the graph resulting from a slightly more elaborate version of this shortcutting scheme has weight $O(\log n)\cdot wt(T)$. Their scheme differs from what we have presented in that instead of building the complete graph on the set of cut vertices $C_T$, they build a certain 1-spanner with $O(k)$ edges and diameter $O(\alpha(k))$ where $\alpha$ is the inverse Ackermann function. Since we consider the parameter $k$ to be constant, this modification does not affect the weight bound of the construction. 
	
	\begin{theorem}[\cite{tree-spanner}]\label{spanner-theorem}
    	Let $T$ be a weighted rooted tree and let $k$ be a positive integer, $4\le k \le n/2 - 1$.
    	The graph $G$ obtained by applying the algorithm described above to $T$ using the parameter $k$ is a 1-spanner for the tree metric induced by $T$. Moreover, $G$ has diameter bounded by $O(\log_k n)$, weight bounded by $O(k^2\cdot \log_k n)\cdot wt(T)$ and maximum degree bounded by $\Delta + O(k)$ where $\Delta$ is the maximum degree of $T$.
	\end{theorem}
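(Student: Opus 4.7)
The statement bundles four properties: the 1-spanner guarantee, diameter $O(\log_k n)$, weight $O(k^2\log_k n)\cdot wt(T)$, and maximum degree $\Delta + O(k)$. My plan is to establish each by induction on the recursion depth, which Lemma~\ref{elkin-solomon-lemma} bounds by $O(\log_{k/2} n) = O(\log_k n)$. The 1-spanner property is essentially immediate: the algorithm initialises $G := T$, so every tree edge (and hence every tree path) is preserved in $G$, and every edge subsequently added is a clique edge whose weight equals the tree distance between its endpoints, so $d_G(u,v) = d_T(u,v)$ for all $u,v$.

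The weight and degree bounds fall out of clean per-level accounting. At any recursion level the subtrees on which the algorithm recurses are vertex-disjoint, and on each such subtree $T'$ the algorithm adds a clique with $\binom{|C_{T'}|}{2} = O(k^2)$ edges (using Lemma~\ref{size-of-cut-set}), each of weight at most $wt(T')$. Summing across the subtrees at a single level gives at most $O(k^2)\cdot wt(T)$, and summing over the $O(\log_k n)$ levels yields the claimed total weight. For the degree, any vertex $v$ is added to a cut set $C_{T^{(\ell)}}$ at exactly one level --- once there it is removed before the algorithm recurses --- and at that level it gains at most $|C_{T^{(\ell)}}|-1\le k$ clique neighbours, so $\deg_G(v)\le \deg_T(v)+k\le \Delta + O(k)$.

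The diameter is where I expect to do real work, and the approach exploits the fact that for every recursive subtree $T'$ both $rt(T')$ and $l(T')$ belong to $C_{T'}$. I would first establish an auxiliary lemma: for every $u \in T$, the spanner $G$ contains a path of $R(n) = O(\log_k n)$ hops from $u$ to $rt(T)$ whose weight equals $d_T(u,rt(T))$. The induction is short: either $u \in C_T$ and one clique edge does it, or $u$ lies in some subtree $T''$ of $T\setminus C_T$, in which case one recursively routes $u\to rt(T'')$ inside $T''$'s spanner, traverses the tree edge $rt(T'')\to p$ (where $p=\mathrm{parent}_T(rt(T''))\in C_T$), and takes a single clique edge $p\to rt(T)$, giving the recurrence $R(n)\le R(2n/k)+2$. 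To route an arbitrary pair $u,v$, I would locate the deepest recursion level $\ell^*$ at which $u$ and $v$ still share a subtree $T^*$, identify the cut vertices $c_u,c_v\in C_{T^*}$ at which the tree path $P(u,v)$ first and last meets $C_{T^*}$, use the auxiliary lemma inside each of the level-$(\ell^*+1)$ subtrees $T_u, T_v$ containing $u, v$ to bring $u$ to the neighbour of $c_u$ in $T_u$ and $v$ to the neighbour of $c_v$ in $T_v$, and bridge with the single clique edge $c_u\to c_v$ in $C_{T^*}$. A case analysis on whether $u$ or $v$ itself belongs to $C_{T^*}$ and on the position of $\mathrm{lca}(u,v)$ relative to $T_u, T_v$ confirms that the anchor vertices chosen lie on $P(u,v)$, so that the concatenated path has weight exactly $d_T(u,v)$.

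The main obstacle is this diameter argument. A naive ``recurse on both sides'' strategy gives the recurrence $D(n)\le 2D(2n/k)+O(1)$, which solves to a polynomial in $n$ rather than a logarithm. The whole point of the auxiliary lemma is to avoid this blow-up: because the boundary vertices of every subtree are themselves cut vertices at the next level, recursive navigation to a boundary costs only $R(n)$, not $D(n)$, and the overall hop count telescopes to $O(\log_k n)$.
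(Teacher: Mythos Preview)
The paper does not prove this theorem. Theorem~\ref{spanner-theorem} is quoted from Solomon and Elkin~\cite{tree-spanner} and stated with a citation; the surrounding text only imports Lemmas~\ref{size-of-cut-set} and~\ref{elkin-solomon-lemma} from the same source and remarks that the recursion depth is $O(\log n)$. There is therefore nothing in the paper to compare your argument against.

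That said, your reconstruction is essentially the standard one and is correct. The 1-spanner, weight, and degree arguments are fine as written. For the diameter bound, your telescoping strategy via an auxiliary ``route to the boundary in $R(n)$ hops'' lemma is the right idea, and your observation that a naive two-sided recursion would blow up is exactly why the auxiliary lemma is needed. The only point that deserves a little more care is that you state the auxiliary lemma for routing to $rt(T')$, but in the general pair case the exit vertex of $P(u,v)$ from $T_u$ may be $l(T_u)$ rather than $rt(T_u)$ (Lemma~\ref{at-most-two-entrances}). You clearly anticipate this in your last paragraph, and the fix is routine: prove the auxiliary lemma simultaneously for both targets $rt(T')$ and $l(T')$ by the same induction, using that both belong to $C_{T'}$ and that the path $P(u,t)$ exits the child subtree $T''$ through one of $rt(T'')$, $l(T'')$, after which a single tree edge followed by a single clique edge reaches $t$. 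With that adjustment the recurrence $R(n)\le R(2n/k)+2$ and the final assembly go through exactly as you describe.
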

	
    Let $G$ be the spanner resulting from running the algorithm described above on some tree $T$ with respect to the parameter $k$. We define \textit{canonical subtrees} of $T$ with respect to $k$ to be the subtrees computed during the course of the construction of $G$. Canonical subtrees are defined recursively as follows. As a base for the recursive definition, the input tree $T$ is considered to be a canonical subtree with respect to $T$ and $k$. Suppose $T'$ is a canonical subtree with respect to $T$ and $k$. Then each tree in the forest $T'\setminus C_{T'}$ is a canonical subtree with respect to $T$ and $k$. We speak of canonical subtrees without reference to the parameters $T$ and $k$ when they are clear from the context. Given a vertex $v$ of $T$, we denote by $T^v$ the canonical subtree for which $v\in C_{T^v}$. When a canonical subtree $T'$ is small enough, $C_{T'}=V(T')$ and so it is clear that $C_{T^v}$ is well defined for each $v\in V(T)$. We say that $T^v$ is the canonical subtree of $v$ and that $v$ is a cut vertex of $T^v$.
    
	We establish some technical properties of canonical subtrees which will be of use in the following section. We reword statement 4 in Corollary 2.17 from the paper of Solomon and Elkin \cite{tree-spanner} in Lemma~\ref{at-most-two-entrances} and prove it using our terminology for completeness.
	
	\begin{lemma}\label{at-most-two-entrances}
		Let $T'$ be a canonical subtree of $T$ and let $T''$ be the canonical subtree such that $T' \in T''\setminus C_{T''}$. There are at most two edges in $T$ with one endpoint in $T'$ and the other outside $T'$. Moreover, any vertex of $T'$ incident to a vertex outside $T'$ must be $rt(T')$ or $l(T')$.
	\end{lemma}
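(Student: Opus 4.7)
The plan is to prove the lemma by induction on the depth of $T'$ in the recursion tree of canonical subtrees, where the root of that recursion tree is $T$ itself. The base case $T' = T$ is trivial since no edges of $T$ leave $T$. For the inductive step I fix the canonical subtree $T''$ such that $T' \in T'' \setminus C_{T''}$ and assume the lemma holds for $T''$. The first reduction is that every edge of $T$ with one endpoint in $T'$ and the other outside $T'$ is either an edge of $T''$ going from $T'$ to a vertex in $T'' \setminus T'$, or an edge of $T$ going from $T''$ to $T \setminus T''$ whose $T''$-endpoint lies in $T'$. By the inductive hypothesis applied to $T''$, every edge of the latter type is incident to $rt(T'')$ or $l(T'')$, both of which lie in $C_{T''}$ and hence are not in $T'$. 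Thus it suffices to count edges of $T''$ from $T'$ to $C_{T''}$ and show that there are at most two of them, both incident to $rt(T')$ or $l(T')$.

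The core structural observation I would establish is: if $v \in V(T'') \setminus C_{T''}$ has a child $c$ in $T''$ with $c \in C_{T''}$, then $c = c_1(v)$. I would prove this by case analysis on why $c$ lies in $C_{T''}$. If $c = l(T'')$, then $c$ is the last vertex of $P(rt(T''))$, its parent lies on $P(rt(T''))$ just above it, and $c$ is that parent's leftmost child. Otherwise $c$ is a $b$-vertex $b_d(u)$ produced by the $CV$ recursion at some recursion root $u$ (either $rt(T'')$ or a child of a higher $b$-vertex). If $c = u$, then $p(c)$ is either undefined, $rt(T'')$, or the parent $b$-vertex, and in every nontrivial case $p(c) \in C_{T''}$, contradicting that $v = p(c)$ is non-cut. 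Otherwise $c$ lies strictly below $u$ on $P(u)$, so $p(c) \in P(u)$ and $c$ is again the leftmost child of $p(c)$.

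With the core observation in hand the argument concludes as follows. The topmost vertex of $T'$ has its parent in $T''$ outside $T'$, and since $T'$ is a connected component of $T'' \setminus C_{T''}$ that parent must be in $C_{T''}$, which supplies one crossing edge at $rt(T')$. Any further crossing edge must be a downward edge $(v, c_1(v))$ with $v \in T'$ and $c_1(v) \in C_{T''}$, by the core observation. I would then show the only such $v$ is $l(T')$: any $v \in T'$ not on the leftmost descending path from $rt(T')$ to $l(T')$ lies inside a subtree $T_w$ where $w$ is a non-leftmost child of a non-cut vertex on that leftmost path. Because the $CV$ recursion only descends into subtrees rooted at children of $b$-vertices, no $b$-vertex can appear inside $T_w$; and $T_w$ avoids the main leftmost path of $T''$ so $l(T'') \notin T_w$. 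Hence $T_w$ contains no cut vertex, so no vertex inside $T_w$ has a cut-vertex child, and the only candidate $v$ remaining is $l(T')$, whose only possibly-cut child is $c_1(l(T'))$.

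I expect the main obstacle to be the core observation, particularly the careful handling of the two flavors of $b$-vertex---those that coincide with their recursion root, where the parent is a higher $b$-vertex and hence cut, versus those strictly below, where the parent lies on a leftmost path and is non-cut. A secondary technical point is verifying that hanging subtrees inside a component truly contain no cut vertex; this relies on the asymmetry in the definition of $C_{T''}$, namely that only $\{rt(T''), l(T'')\}$---and not the roots or leftmost vertices of deeper recursive calls---are added alongside $CV(T'',d)$.
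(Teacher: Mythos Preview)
Your approach is correct and differs from the paper's in organization. The paper does not set up an explicit induction; it asserts directly that any edge leaving $T'$ has its other endpoint in $C_{T''}$ (which, as you note, tacitly uses the statement for $T''$), and then proves a claim about the descendant structure of $CV(T'',d)$: if $u,v\in C_{T''}$ with $v$ descending from a child $u'$ of $u$, then $v$ equals or descends from $b_d(u')$. From this the paper argues directly that at most one cut vertex below $rt(T')$ can be adjacent to $T'$, and that its parent is $l(T')$.

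Your route via the ``core observation'' (a non-cut vertex can have a cut child only through its leftmost edge) together with the decomposition of $T'$ into its leftmost spine plus hanging subtrees $T_w$ is a different packaging of the same underlying fact that the $CV$ recursion propagates only along leftmost paths and through children of $b$-vertices. It is more structural and makes the induction explicit, which is arguably cleaner; the paper's argument is shorter but leaves the inductive dependence unstated.

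Two small points to tighten. First, after disposing of the hanging subtrees you conclude ``the only candidate $v$ remaining is $l(T')$'', but you have not explicitly ruled out vertices $v$ on the leftmost spine of $T'$ with $v\ne l(T')$. This is immediate---for such $v$, the child $c_1(v)$ is the next spine vertex and hence lies in $T'$, so is not a cut vertex---but it should be said. Second, your justification that $l(T'')\notin T_w$ should rest on the edge $(z,w)$ being non-leftmost (so no descendant of $w$ can lie on $P(rt(T''))$), rather than on $z$ lying on ``the main leftmost path of $T''$'': the vertex $z$ lies on the leftmost spine of $T'$, which need not be contained in $P(rt(T''))$.
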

	
	\begin{proof}
     We first need to establish the following claim.\\
     
     Claim:
     Let $u, v \in C_{T'}$ for some canonical subtree $T'$. Suppose $v$ is a descendant of $u'$ where $u'$ is some child of $u$. If $w$ is the first $d$-balanced vertex on the path from $u'$ to the left most vertex in $T'_{u'}$, then $v=w$ or $v$ is a descendant of $w$.\\
     
     Proof of claim:
     By definition of the cut vertex procedure, $v$ is a member of the set $CV(T'_{u'}, d)$. If $v=w$, we are done so suppose $v\ne w$. Every cut vertex in $CV(T'_{u'}, d)$ other than $w$ is contained in some $T'_{w'}$ for some child $w'$ of $w$. The claim follows.\\
     
     We note that any edge with one endpoint in $T'$ and the other outside $T'$ must have its other endpoint in $C_{T''}$. In particular, the parent of $x:=rt(T')$ must be an element of $C_{T''}$. Let $v_1$ be the parent of $x$. Suppose there is a second edge from a vertex $v_2 \in C_{T''}$ to a vertex $w$ in $V(T')$. Note that $v_2$ must be a child of $w$ for otherwise $T$ would contain a cycle. Then $v_2$ is a descendant of $x$. Let $x'$ be the first $d$-balanced vertex on the path from $x$ to the left most vertex in $T''_{x}$. Note that $x'\in C_{T''}$. By the claim, $v_2$ either coincides with, or is a descendant of, $x'$. If the latter holds, the parent of $v_2$ cannot be a vertex in $T'$ and so $v_2=x'$. Suppose there exists a third vertex $v_3\in C_{T''}$ which has a parent in $T'$. Then $v_3$ is a member of $C_{T''}$ which is a descendant of $x$ and neither equal to, nor a descendant of, the vertex $x'$ which is impossible given the claim. Since $v_2$ is the first $d$-balanced vertex on the path from $x$ to the left most vertex in $T''_x$, $v_2$ must be the leftmost child of its parent. It follows that the parent of $v_2$ is $l(T')$.
 \end{proof}

	\begin{lemma}\label{root-of-canonical-subtree}
	    Let $T'$ and $T''$ be canonical subtrees such that $T''\in T'\setminus C_{T'}$ and let $v$ be some vertex of $T''$. Let $X$ be the set of vertices in $C_{T'}$ which are ancestors of $v$. Let $x$ be the element of $X$ deepest in $T$ and let $x'$ be the child of $x$ which is an ancestor of $v$. Then $x'=rt(T'')$. (See Figure \ref{fig:root-of-canonical-subtree}.)
	\end{lemma}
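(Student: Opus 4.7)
The plan is to show that $x'$ has no proper ancestor in $T''$, so that it must be the root of $T''$.

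First I would observe that $X$ is non-empty. Since $T''$ is a tree in the forest $T'\setminus C_{T'}$, the set $C_{T'}$ cannot equal $V(T')$, and hence from the definition of $C_{T'}$ we have $rt(T')\in C_{T'}$. Since $v\in V(T'')\subseteq V(T')$, the root $rt(T')$ is an ancestor of $v$ in $T$ and thus lies in $X$. So $x$ and its child $x'$ (on the path from $x$ to $v$) are well defined.

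Next I would argue that $x'\in V(T'')$. Since $x$ is the deepest element of $X$ and $x'$ is a strict descendant of $x$ which is also an ancestor of $v$, we have $x'\notin C_{T'}$. I claim further that every vertex on the path $P(x',v)$ in $T'$ avoids $C_{T'}$: such a vertex would be an ancestor of $v$ that is a descendant of $x'$, hence strictly deeper in $T$ than $x$, contradicting the maximality of $x$ in $X$. Therefore $x'$ and $v$ lie in the same connected component of the forest $T'\setminus C_{T'}$, and since $v\in V(T'')$ this component is $T''$.

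Finally I would conclude that $x'=rt(T'')$. The parent of $x'$ in $T$ is $x$, which belongs to $C_{T'}$ and hence is not a vertex of $T''$. Thus $x'$ has no parent in $T''$, so it is the root of $T''$. The main (and only mildly subtle) step is the middle one: verifying that the $T'$-path from $x'$ down to $v$ is entirely free of cut vertices; everything else is essentially bookkeeping with the definitions.
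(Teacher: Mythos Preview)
Your proof is correct and is essentially the same argument as the paper's, just run in the opposite direction: the paper starts from $rt(T'')$, observes that its parent lies in $C_{T'}\cap P(rt(T'),v)$ and is the deepest such vertex, hence equals $x$; you instead start from $x'$, show the downward path $P(x',v)$ avoids $C_{T'}$ so that $x'\in V(T'')$, and then note its parent $x$ is not in $T''$. Both hinge on the same observation that the last cut vertex on the root-to-$v$ path is exactly the parent of $rt(T'')$.
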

	
\begin{proof}
    Consider the path $P(rt(T'), v)$. Note that all vertices in $X$ lie on this path. It is easy to see that this path enters $T''$ through $rt(T'')$. The vertex on this path preceding $rt(T'')$ is a member of $C_{T'}$, and an ancestor of $v$ and therefore a member of $X$. Since the predecessor of $rt(T'')$ is clearly deeper than all other vertices in $X$, the lemma follows.
 \end{proof}
	
	\begin{figure}
	    \centering
	    \includegraphics[width=50mm,scale=0.5]{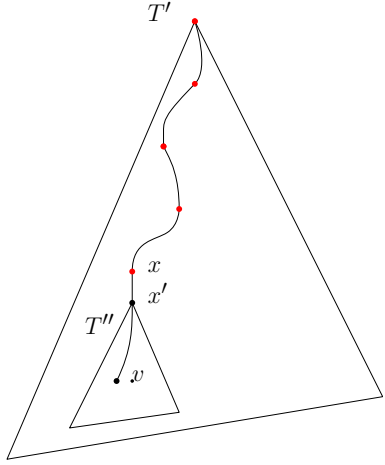}
	    \caption{Lemma \ref{root-of-canonical-subtree}. The set $X$ consists of the red vertices.} 
	    \label{fig:root-of-canonical-subtree}
	\end{figure}
	
	In the spanner, a vertex $u$ is connected to all vertices in $C_{T^u}$. The following lemma ensures that in the routing algorithm described in the next section, under certain conditions, it is safe to make a `greedy' choice from a subset of vertices in $C_{T^u}$.
	
	\begin{lemma}\label{u-tree-in-v-tree-u-ancestor}
	Let $u$ and $v$ be vertices in $G$ such that $v$ is not a vertex of $T^u$. Let $X$ be the set of vertices in $C_{T^u}$ which are ancestors of $v$. Suppose $X\ne \emptyset$ and let $x$ be the last vertex on the path from $u$ to $v$ which is contained in $T^u$. Then $x\in X$. Moreover, $x$ is the deepest vertex in $X$. (See Figure \ref{fig:u-tree-in-v-tree-u-ancestor}.)
	\end{lemma}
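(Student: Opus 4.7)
The plan is to establish three facts in sequence: (a) $x$ lies in $C_{T^u}$, (b) $x$ is an ancestor of $v$, and (c) no other vertex of $X$ is deeper than $x$. Fact (a) is immediate: by the definition of $x$, its successor on $P(u,v)$ lies outside $T^u$, so Lemma~\ref{at-most-two-entrances} forces $x\in\{rt(T^u), l(T^u)\}\subseteq C_{T^u}$. Throughout, I rely on the structural fact that each canonical subtree is a connected subtree of $T$ (being a connected component of a forest obtained from $T$), with $rt(T^u)$ its unique highest vertex.

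For (b), I split $P(u,v)$ at $lca(u,v)$ and argue by contradiction that $x$ cannot lie strictly on the upward portion $P(u, lca(u,v))$. If it did, the successor of $x$ on $P(u,v)$ would be its parent in $T$, which lies outside $T^u$; this forces $x = rt(T^u)$, so $lca(u,v)$ sits strictly above $rt(T^u)$ in $T$. But then no ancestor of $v$ can lie in $T^u$: those at or above $lca(u,v)$ are above $rt(T^u)$, and those strictly below $lca(u,v)$ sit in a different child-subtree of $lca(u,v)$ than the one containing $rt(T^u)$. This contradicts $X \neq \emptyset$. Hence $x$ sits on the downward portion $P(lca(u,v), v)$, making it an ancestor of $v$, so $x\in X$.

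For (c), first note that $lca(u,v)\in T^u$: the unique $u$-to-$x$ path in $T$ passes through $lca(u,v)$, and since $u, x\in T^u$ and $T^u$ is connected, this path lies entirely in $T^u$. Now let $x'\in X$. If $x'$ lies on $P(u,v)$, then as an ancestor of $v$ it sits on the downward portion, and by the maximality of $x$ on $V(T^u)\cap P(u,v)$ the vertex $x'$ is no closer to $v$ than $x$; hence $x'$ is an ancestor of $x$ in $T$. Otherwise $x'$ is a strict ancestor of $lca(u,v)$; since $x'\in T^u$ with $rt(T^u)$ the highest vertex of $T^u$, $x'$ must lie on $P(lca(u,v), rt(T^u))$ and is again an ancestor of $x$. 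Thus every $x'\in X$ is at most as deep as $x$, so $x$ is the deepest vertex of $X$. The subtle point I expect to be the main obstacle is exactly this off-path subcase, where one must combine the connectedness of $T^u$ with the highest-vertex characterisation of $rt(T^u)$ to confine $x'$ to the short ancestral segment from $lca(u,v)$ up to $rt(T^u)$.
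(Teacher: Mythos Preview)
Your argument is correct, but it takes a genuinely different route from the paper. The paper's proof leans much harder on the full strength of Lemma~\ref{at-most-two-entrances}: having pinned down $x\in\{rt(T^u),l(T^u)\}$, it observes that $X\ne\emptyset$ forces $rt(T^u)$ to be an ancestor of $v$, and since the only \emph{downward} exit from $T^u$ is through $l(T^u)$, the vertex $l(T^u)$ must be an ancestor of $v$ as well; hence $x\in X$ regardless of which of the two it is. For the ``deepest'' part the paper argues by contradiction: if some $x^*\in X$ with $x^*\ne x$ were deepest, the path $P(x^*,v)$ would also have to exit $T^u$ through one of $\{rt(T^u),l(T^u)\}$, and together with the exit at $x$ this produces two distinct $rt(T^u)$--$v$ paths in the tree.

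Your approach instead treats $T^u$ as a generic connected subtree with a unique highest vertex and organises everything around the $lca(u,v)$ decomposition of $P(u,v)$: you rule out $x$ lying on the ascending leg (else $X=\emptyset$), deduce $lca(u,v)\in T^u$ from connectedness, and then show directly that every $x'\in X$ is an ancestor of $x$ by splitting into on-path and off-path cases. This is longer but more elementary: it never uses that the only downward exit is $l(T^u)$, and in particular your ``off-path'' subcase (where $x'$ is a strict ancestor of $lca(u,v)$, confined to the segment from $lca(u,v)$ up to $rt(T^u)$) is an honest case that the paper's argument sidesteps by appealing to the two-exit structure. The paper's proof is crisper because it exploits the specific geometry of canonical subtrees; yours would survive in a setting where Lemma~\ref{at-most-two-entrances} gave only $x\in C_{T^u}$ without the sharper identification.
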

	
    \begin{proof}
      By Lemma \ref{at-most-two-entrances}, $x$ is either $rt(T^u)$ or $l(T^u)$. Since $X\ne \emptyset$, $rt(T^u)$ is an ancestor $v$ which also implies $l(T^u)$ is an ancestor of $v$. Then $x$ is an ancestor of $v$ and it remains to be shown that $x$ is the deepest vertex in $X$. Let $x^*$ be the deepest vertex in $X$ and suppose $x^* \ne x$. The path from $x^*$ to $v$ must leave $T^u$ through either $rt(T^u)$ or $l(T^u)$. Since we also have that $x\in \{rt(T^u), l(T^u)\}$ and since $x^*\ne x$, we must have $x=rt(T^u)$ and $x^*=l(T^u)$. Then there would be two   distinct paths from $rt(T^u)$ to $v$, a contradiction since   $T$ is a tree. The lemma follows.
    \end{proof}
	
	\begin{figure}[h!]
	    \centering
	    \includegraphics[width=50mm,scale=0.5]{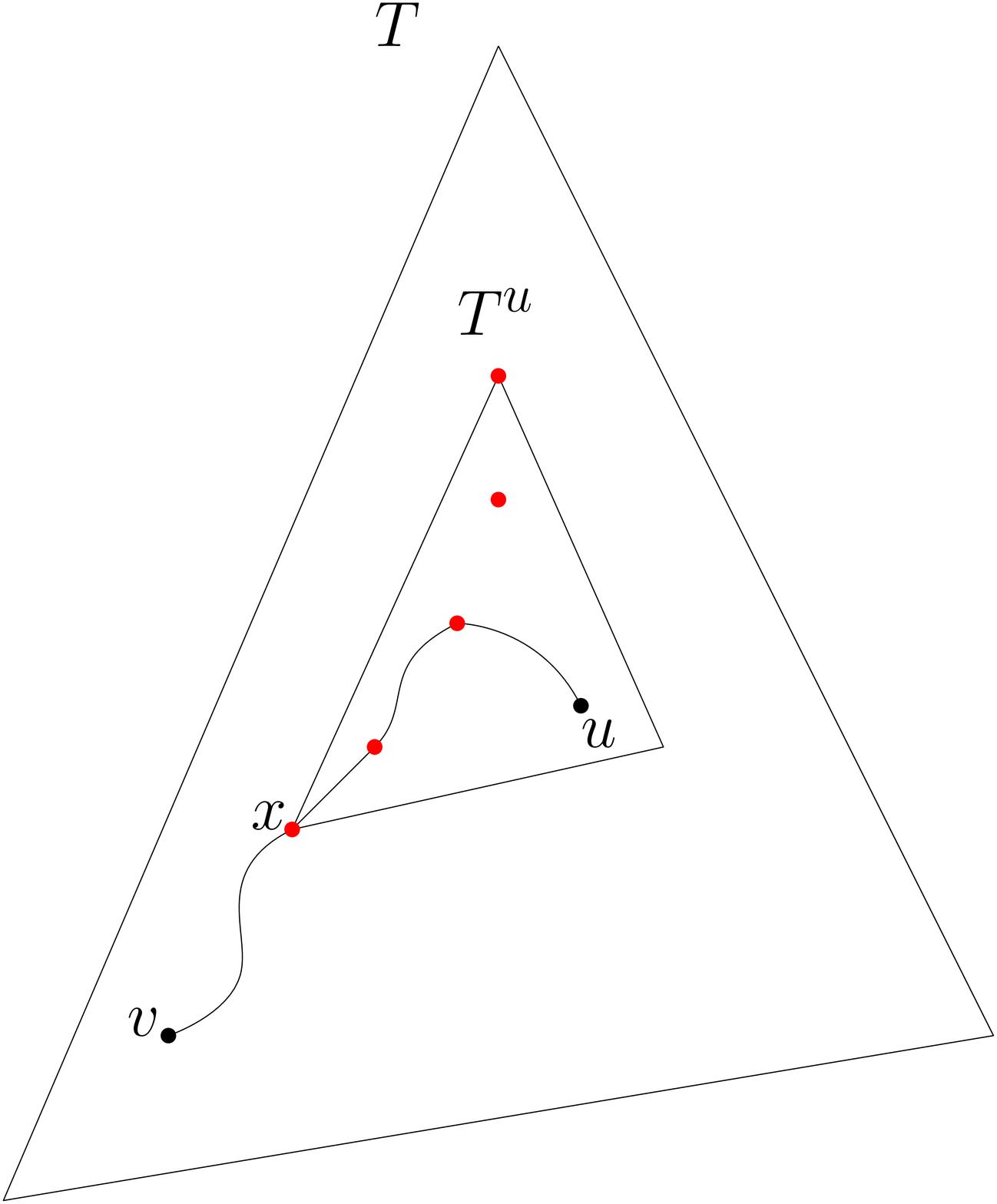}
	    \caption{Lemma \ref{u-tree-in-v-tree-u-ancestor}. The set $X$ consists of the red vertices.} 
	    \label{fig:u-tree-in-v-tree-u-ancestor}
	\end{figure}
    
    \begin{lemma}\label{u-descendant-of-v}
        Let $u$ and $v$ be vertices of $T$ such that $u$ is a descendant of $v$ and $T^v$ is contained in $T^u$. Let $T'$ be the canonical subtree in the forest $T^u\setminus C_{T^u}$ which contains $v$ and let $X$ be the set of vertices in $C_{T^u}$ which are descendants of $v$ and ancestors of $u$. Let $x$ be the element of $X$ which is highest in $T$. Then either $rt(T')$ or $l(T')$ is the parent of $x$. (See Figure \ref{fig:u-descendant-of-v}.)
    \end{lemma}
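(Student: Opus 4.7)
The approach is to trace the unique downward path $P(v, u)$ in $T$ and locate the point where it exits the canonical subtree $T'$. First I would observe that every element of $X$ must lie on $P(v, u)$, since a vertex that is simultaneously a descendant of $v$ and an ancestor of $u$ belongs to the unique $v$-to-$u$ ancestral chain. Hence the highest vertex $x$ of $X$ can be characterised as the first vertex of $C_{T^u}$ encountered when walking from $v$ down towards $u$.

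Since $T^u$ is a subtree of $T$ containing both $v$ and $u$, the path $P(v, u)$ lies inside $T^u$. Moreover, $T'$ contains $v$ but not $u$ (because $u \in C_{T^u}$ is excluded from every component of $T^u \setminus C_{T^u}$), so $P(v, u)$ must leave $T'$ along some edge $(a, b)$ with $a \in V(T')$ and $b \in V(T^u) \setminus V(T')$. Applying Lemma~\ref{at-most-two-entrances} to $T'$, the endpoint $a$ equals $rt(T')$ or $l(T')$.

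It then remains to show $b = x$. Because the canonical subtrees in the forest $T^u \setminus C_{T^u}$ are its connected components, no edge of $T^u$ joins $T'$ to a distinct canonical subtree; thus the neighbour $b \notin V(T')$ of $a \in V(T')$ must satisfy $b \in C_{T^u}$. As $b$ also lies on $P(v, u)$, it is a descendant of $v$ and either equals $u$ or is an ancestor of $u$, so $b \in X$. Being the first vertex of $C_{T^u}$ reached from $v$, $b$ is the highest element of $X$, i.e.\ $b = x$. The parent of $x$ is therefore $a \in \{rt(T'), l(T')\}$, which is exactly the claim.

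The main subtle step is forcing $b$ into $C_{T^u}$ rather than into some sibling canonical subtree; this relies on interpreting the canonical subtrees of $T^u \setminus C_{T^u}$ as the connected components obtained after deleting the cut vertex set, so that any edge of $T^u$ leaving $T'$ must have its outside endpoint in $C_{T^u}$. Once that is in hand, the rest is a direct application of Lemma~\ref{at-most-two-entrances}.
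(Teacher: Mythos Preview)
Your proof is correct and follows essentially the same argument as the paper. The only cosmetic difference is that the paper traces $P(u,v)$ from $u$ upward and locates the first vertex $w$ of $T'$, whereas you trace $P(v,u)$ downward and locate the last vertex $a$ of $T'$; in both cases Lemma~\ref{at-most-two-entrances} forces this boundary vertex into $\{rt(T'),l(T')\}$, and the adjacent vertex outside $T'$ is shown to lie in $C_{T^u}$ and to be the highest element of $X$.
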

    
    \begin{proof}
      Note that $X=P(u, v)\cap C_{T^u}$. Let $w$ be the first vertex on $P(u, v)$ which is contained in $T'$. By Lemma \ref{at-most-two-entrances}, $w\in \{rt(T'), l(T')\}$. Let $x'$ be the predecessor of $w$ on $P(u, v)$.  Since $v$ is an ancestor of $u$, $x'$ is a child of $w$. Since $x'$ is a vertex outside $T'$ connected to a vertex in $T'$, $x'\in C_{T^u}$ and hence $x'\in X$. Clearly $x'$ is also the highest vertex in $X$ and so $x'=x$ and the lemma follows.
    \end{proof}
    
    \begin{figure}[h!]
	    \centering
	    \includegraphics[width=50mm,scale=0.5]{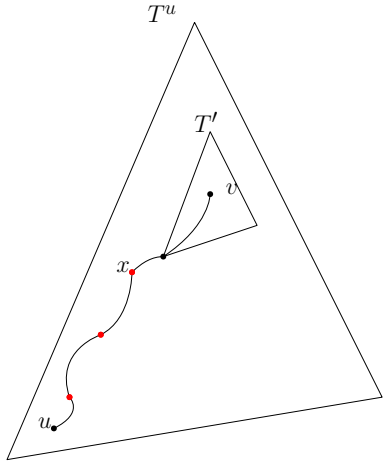}
	    \caption{Lemma \ref{u-descendant-of-v}. The set $X$ consists of the red vertices.} 
	    \label{fig:u-descendant-of-v}
	\end{figure}

\subsection{Routing Algorithm}
	In this section, we describe a local routing algorithm for the spanner defined above. Throughout what follows, let $G$ denote the graph obtained when the algorithm of the previous section is applied to a weighted, rooted tree $T$ using a parameter $k\ge 4$.
	We first define the labels $label(v)$ for vertices $v\in V(G)$.
	We make use of the interval labelling scheme of Santaro and Khatib \cite{santoro-khatib}.
    Let $rank(v)$ denote the rank of $v$ in a post-order traversal of $T$. We define $$L(v):=\min\{rank(w):w\in V(T_v)\}.$$
    We define the label of $v$ to be $label(v)=[L(v), rank(v)]$. The observation used in the routing algorithm of Santaro and Khatib \cite{santoro-khatib} is that a vertex $w$ is a descendant of $v$ if and only if $rank(w)\in[L(v), rank(v)]$. Note that the label of each vertex can be computed in linear time in a single traversal of the tree.
	
	\begin{lemma}\label{small-data}
	    In the labelling scheme outlined above, each  vertex of $G$ stores $O((\Delta + k)\log n)$ bits of information.
	\end{lemma}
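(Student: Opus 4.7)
The plan is to decompose the storage at each vertex into two components and bound each separately. A vertex $v$ stores its own label $[L(v), rank(v)]$, and, by the specification of the routing model, the labels of each of its neighbours in the spanner $G$. Since the routing algorithm is not yet specified in full, I would argue that no additional per-vertex information is needed beyond what the model already permits, so the storage bound reduces to counting these labels.

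For the label itself, I would observe that $L(v)$ and $rank(v)$ are both integers in $\{1, \ldots, n\}$ drawn from a post-order traversal of $T$, so each can be encoded in $\lceil \log n \rceil$ bits. Hence $label(v)$ consumes $O(\log n)$ bits. For the neighbour labels, I would invoke Theorem \ref{spanner-theorem}, which guarantees that the maximum degree of $G$ is at most $\Delta + O(k)$. Each of these at most $\Delta + O(k)$ neighbours contributes a label of size $O(\log n)$, so the total contribution is $O((\Delta + k)\log n)$ bits.

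Summing gives $O(\log n) + (\Delta + O(k)) \cdot O(\log n) = O((\Delta + k)\log n)$ bits, matching the claim. There is no real obstacle here; the only thing to be careful about is that the degree bound I am relying on is the spanner degree $\Delta + O(k)$ from Theorem \ref{spanner-theorem}, not the tree degree $\Delta$, since the neighbours counted are neighbours in $G$ and not in $T$.
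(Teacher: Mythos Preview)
Your proposal is correct and follows essentially the same argument as the paper: bound the size of a single label by $O(\log n)$ since $L(v)$ and $rank(v)$ lie in $\{1,\dots,n\}$, then multiply by the degree bound $\Delta + O(k)$ from Theorem~\ref{spanner-theorem} to account for the neighbour labels. Your explicit caveat about using the spanner degree rather than the tree degree is exactly the right point, and in fact the paper's own proof states the degree as $\Delta + O(1)$ where it should read $\Delta + O(k)$, so your version is slightly more careful.
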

	\begin{proof}
	    Since $G$ has $n$ vertices, for each $v\in V(G)$, $rank(v)$ and $L(v)$ are both integers in the interval $[1,...,n]$ and therefore require $O(\log n)$ bits to be represented. By Theorem \ref{spanner-theorem}, each vertex of $G$ has at most $\Delta + O(1)$ neighbours in $G$. Then, for any $v\in V(G)$, we require $O((\Delta + k)\log n)$ bits to store $rank(w)$ and $L(w)$ for each neighbour $w$ of $v$.
	\end{proof}
	
	For our routing algorithm, no auxiliary data structure is required at each vertex and so the total storage requirement per vertex is $O((\Delta + k)\log n)$ bits.
    Our routing algorithm considers a number of cases depending on the labels of the current vertex $u$ and destination vertex $v$.
    For convenience of analysis, in each case we specify two routing steps. For ease of exposition, we consider a vertex to be both a descendant and ancestor of itself.
	
	\textbf{Case 0:} If $v$ is a neighbour of $u$, route to $v$.\\
	
	\textbf{Case 1:}
	$u$ is an ancestor of $v$ in $T$. Let $X$ be the set of vertices in $C_{T^u}$ which are ancestors of $v$. Let $x$ be the deepest element of $X$. Route first to $x$ and then to the child of $x$ which is an ancestor of $u$. \\

	\textbf{Case 2:}
	$u$ is a descendant of $v$ in $T$. Let $X$ be the set of vertices in $C_{T^u}$ which are descendants of $v$ and ancestors of $u$. Let $x$ be the highest vertex in $X$. Route first to $X$ and then to its parent.\\

	\textbf{Case 3:} $u$ is not an ancestor or descendant of $v$. Let $X$ be the set of vertices in $C_{T^u}$ which are ancestors of $v$ and not ancestors of $u$. If $X\ne \emptyset$, we define $x$ to be the deepest vertex in $X$ and define $x'$ to be the child of $x$ which is an ancestor of $v$. Let $Y$ be the set of vertices in $C_{T^u}$ which are ancestors of $u$ but not ancestors of $v$. We define $y$ to be the highest vertex in $Y$.\\ 
	
	\indent \textbf{Case 3 a):} $X$ is empty. Route first to $y$ and then to the parent of $y$.\\
	\\
	\indent \textbf{Case 3 b):} $X$ is non-empty. Route first to $x$ and then to $x'$.\\
	
	The routing algorithm uses a greedy strategy. Given the destination vertex and neighbours due to tree edges and shortcut edges, the algorithm simply selects the edge that appears to make the most progress. It is not obvious that this strategy gives the desired $O(\log n)$ diameter of the routing algorithm. Indeed, this bound would not hold if the shortcuts were arbitrary edges and hinges on the particular structure of the 1-spanner.
	
	The arguments we make in our analysis will make use of certain integer sequences we assign to vertices of $G$. Note that these sequences are used only for the analysis of the algorithm and are not part of the labelling scheme. We define a unique integer sequence for each canonical subtree computed during the course of the spanner construction. Each vertex $v$ will be assigned the sequence corresponding to the tree $T^v$.
	
	 The integer sequence for each canonical subtree is defined recursively as follows. The input tree $T$ is given the empty sequence. Suppose $T'$ is a canonical subtree which has already been associated with some sequence $S$. Consider the forest $T'\setminus C_{T'} = \{T_1,...,T_p\}$. Each tree $T_j \in T'\setminus C_{T'}$ is associated with the sequence obtained by appending $j$ to $S$. It is clear that every vertex of $T$ appears in $C_{T'}$ for exactly one canonical subtree $T'$. Let $S_v$ denote the sequence assigned to the vertex $v$. We refer to $S_v$ as the canonical sequence of $v$.
	 
	Observe that if for two vertices $u, v\in V(G)$ we have that $S_u = S_v$, by definition of the spanner construction algorithm, $u$ and $v$ must be cut vertices of the same canonical subtree of $T$ and are therefore connected by an edge in $G$. The routing algorithm works by choosing a successor vertex so as to incrementally transform the canonical sequence of the current vertex into the canonical sequence of the destination vertex.

	\begin{lemma}\label{always deeper after case 1}
    	Let $u$ and $v$ be vertices of $G$ such that $u$ is an ancestor of $v$. Consider the two vertices visited after executing the routing steps of Case 1 when routing from $u$ to $v$. These vertices are on the path from $u$ to $v$ in $T$. Moreover, these vertices are visited in the order they appear on this path.
	\end{lemma}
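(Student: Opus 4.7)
The plan is to verify in sequence that (i) the set $X$ in the definition of Case 1 is non-empty so that $x$ is well defined, (ii) the first visited vertex $x$ lies on $P(u,v)$, (iii) the second visited vertex also lies on $P(u,v)$, and (iv) these two vertices appear in the correct relative order along $P(u,v)$.

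First I would observe that $u \in X$: by definition $u \in C_{T^u}$, and $u$ is an ancestor of $v$ by hypothesis, using the convention that a vertex is its own ancestor. Hence $X \ne \emptyset$ and its deepest element $x$ is well defined. The core step is then to show $x \in P(u,v)$ via a depth comparison. Both $u$ and $x$ are ancestors of $v$, so both lie on the unique path from $rt(T)$ to $v$ in $T$. Since $x$ is the deepest element of $X$ and $u \in X$, the depth of $x$ is at least the depth of $u$, which forces $x$ to be a descendant of $u$ along that path and thus to lie on $P(u,v)$.

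For the second vertex, I would first rule out the degenerate case $x = v$: if $x$ were $v$, then $v \in C_{T^u}$ together with $u \in C_{T^u}$ would make $v$ a spanner neighbour of $u$ via the clique on $C_{T^u}$, so Case~0 and not Case~1 would fire. Thus $x \ne v$, and $x$ has a unique child $x'$ lying on the subpath $P(x,v)$, which is precisely the child of $x$ that is an ancestor of $v$ (the role played by the ``child'' chosen in the Case~1 rule). Since $P(x,v) \subseteq P(u,v)$ and $x'$ is the immediate successor of $x$ on $P(x,v)$, both membership $x' \in P(u,v)$ and the claimed order of visitation ($x$ before $x'$) follow at once. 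Along the way, one also notes that both hops are legal edges of $G$: $u\to x$ comes from the complete graph on $C_{T^u}$, and $x\to x'$ is a tree edge.

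I do not expect a substantial obstacle here; the lemma is essentially a sanity check on the Case~1 routing rule, and the whole argument rests on the elementary observation that any ancestor of $v$ whose depth is at least $\mathrm{depth}(u)$ must lie on $P(u,v)$. The only mildly delicate point is the appeal to Case~0 having priority in order to exclude $x = v$ and thereby guarantee that the ``child of $x$'' step is well defined.
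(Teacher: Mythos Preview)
Your proposal is correct and matches the paper's approach: the paper simply declares this lemma (together with Lemma~\ref{always higher after case 2}) to be ``immediate from the definition of the routing steps,'' and what you have written is precisely the routine unpacking of that claim. Your explicit check that $u\in X$ (so $x$ is well defined and at least as deep as $u$), and your use of Case~0's priority to rule out $x=v$ and ensure the child step is well defined, are exactly the small points one fills in to make ``immediate'' rigorous; the paper leaves these implicit.
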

	
	\begin{lemma}\label{always higher after case 2}
	    Let $u$ and $v$ be vertices of $G$ such that $u$ is a descendant of $v$. Consider the two vertices visited after executing the routing steps of Case 2 when routing from $u$ to $v$. These vertices are on the path from $u$ to $v$ in $T$. Moreover, these vertices are visited in the order they appear on this path.
	\end{lemma}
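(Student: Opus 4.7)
The plan is to verify three facts about the vertices $x$ and $p(x)$ visited in Case 2: that $x$ exists and lies on $P(u, v)$; that $p(x)$ also lies on $P(u, v)$; and that on $P(u, v)$ traversed from $u$ to $v$, $x$ appears strictly before $p(x)$. Together these give exactly the conclusion of the lemma.

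For the first, I would observe that the vertices of $P(u, v)$ are exactly those vertices of $T$ that are simultaneously ancestors of $u$ and descendants of $v$, so the definition of $X$ immediately gives $X \subseteq V(P(u, v))$. The set $X$ is non-empty because $u$ itself lies in $C_{T^u}$ and is, by the convention stated before Case 0, both an ancestor of itself and a descendant of $v$; hence $u \in X$, so $x$ is well defined and lies on $P(u, v)$. For the second, the main step is to show that $x$ is a \emph{proper} descendant of $v$, which forces $p(x)$ to remain a descendant of $v$ (possibly equal to $v$) while still being an ancestor of $u$, and hence to lie on $P(u, v)$. Since the algorithm only reaches Case 2 when Case 0 fails, $v$ is not a neighbour of $u$ in $G$; but every vertex of $C_{T^u}$ is joined to $u$ by an edge of the spanner clique on $C_{T^u}$, so $v \notin C_{T^u}$ and therefore $v \notin X$. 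This rules out $x = v$, so $x$ is a strict descendant of $v$, as required. One can alternatively invoke Lemma \ref{u-descendant-of-v} (whose hypothesis $T^v \subseteq T^u$ is satisfied because $v \in T^u \setminus C_{T^u}$) to obtain the sharper fact that $p(x) \in \{rt(T'), l(T')\}$ for the canonical subtree $T'$ of $T^u \setminus C_{T^u}$ containing $v$, but this stronger statement is not needed here.

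The ordering claim is then immediate: depth in $T$ strictly decreases as one traverses $P(u, v)$ from $u$ toward $v$, and $p(x)$ is exactly one level shallower than $x$, so $p(x)$ appears strictly after $x$ on the path; this matches the order in which the routing algorithm visits the two vertices. The only real subtlety is the argument that $v \notin C_{T^u}$, which rests on Case 0 taking precedence over Case 2. Without this observation one would have to worry that $x$ could equal $v$, in which case $p(x)$ would overshoot $v$ and leave $P(u, v)$; ruling this case out is the main obstacle in the proof.
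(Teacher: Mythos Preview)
Your argument is correct and considerably more careful than the paper's own treatment: the paper simply declares Lemmas~\ref{always deeper after case 1} and~\ref{always higher after case 2} to be ``immediate from the definition of the routing steps'' and gives no further justification. Your observation that Case~0 takes precedence---so that $v \notin C_{T^u}$, hence $v \notin X$ and $x \ne v$---is exactly the point needed to rule out the parent step overshooting $v$, and the paper leaves this entirely implicit.

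One small slip in your parenthetical aside: invoking Lemma~\ref{u-descendant-of-v} requires $T^v \subseteq T^u$, which you justify by asserting $v \in T^u \setminus C_{T^u}$; but knowing only $v \notin C_{T^u}$ does not by itself place $v$ inside $T^u$ at all (for instance $v$ could lie in an ancestor canonical subtree of $T^u$). Since you correctly flag this alternative as unnecessary for the lemma, it does not affect the validity of your main argument.
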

	
	Lemma \ref{always deeper after case 1} and Lemma \ref{always higher after case 2} are immediate from the definition of the routing steps.
	
	\begin{lemma}\label{case-3-a)-lemma}
	    Let $u$ and $v$ be vertices of $G$ such that $u$ is not an ancestor or a descendant of $v$. Suppose the set $X$, as defined in Case 3 of the routing algorithm, is empty. Consider the vertices visited after executing Case 3 a) of the routing algorithm. These vertices are on the path from $u$ to the $v$. Moreover, these vertices are visited by the routing algorithm in the order they appear on this path.
	\end{lemma}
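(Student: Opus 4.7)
The plan is to show, directly from the definitions of $Y$ and $lca(u,v)$, that $y$ and its parent $p(y)$ both sit on the ``upward'' portion of $P(u,v)$ from $u$ to $lca(u,v)$, and in that order. Before doing either, I would verify that $y$ is well defined: since $u\in C_{T^u}$, $u$ is trivially its own ancestor, and by the Case~3 hypothesis $u$ is not an ancestor of $v$, so $u\in Y$ and $y$ exists. The shortcut edge $(u,y)$ (between two vertices of $C_{T^u}$) and the tree edge $(y,p(y))$ both belong to $G$, so the two routing steps are legal.

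Next I would decompose $P(u,v)$ into the upward segment $P(u, lca(u,v))$ followed by the downward segment $P(lca(u,v), v)$ and locate $y$ on the upward segment. The key observation is that on the chain of ancestors of $u$, the property ``is also an ancestor of $v$'' is monotone as one moves up (by transitivity of ancestry), so $Y$ is a contiguous prefix of this chain starting at $u$. Its top element $y$ must therefore be a proper descendant of $lca(u,v)$: it cannot equal $lca(u,v)$ or be a proper ancestor of it, since any such vertex is an ancestor of $v$. Consequently $y$ lies on $P(u, lca(u,v))\subseteq P(u,v)$.

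The step for $p(y)$ is then essentially automatic. Because $y$ is a proper descendant of $lca(u,v)$, walking one step upward in $T$ reaches at most $lca(u,v)$, so $p(y)$ is either $lca(u,v)$ itself or still a proper descendant of it. Either way $p(y)$ is an ancestor of $u$ lying on $P(u, lca(u,v))\subseteq P(u,v)$, strictly closer to $lca(u,v)$ than $y$ is. Traversing $P(u,v)$ from $u$ toward $v$, we therefore encounter $y$ first and $p(y)$ second, matching the order in which the algorithm visits them.

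I do not expect any substantive obstacle: the hypothesis $X=\emptyset$ serves only to put us in Case~3~a) and is not used in the geometric argument. The one point that requires care is ruling out that $p(y)$ ``overshoots'' past $lca(u,v)$ and leaves $P(u,v)$, and this is exactly what the proper-descendant conclusion in paragraph two is designed to prevent, since in a tree the parent of a proper descendant of a vertex is itself a descendant (possibly equal) of that vertex.
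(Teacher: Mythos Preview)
Your proposal is correct and follows essentially the same approach as the paper's proof: both argue that $y$, being an ancestor of $u$ but not of $v$, lies strictly below $lca(u,v)$ on the upward segment $P(u,lca(u,v))$, so its parent $p(y)$ is the next vertex along that segment. You add some useful verifications the paper omits (that $Y\ne\emptyset$ via $u\in Y$, that the two hops are edges of $G$, and that $p(y)$ cannot overshoot past $lca(u,v)$), but the underlying argument is identical; one minor imprecision is your phrasing that ``$Y$ is a contiguous prefix'' of the ancestor chain, since $Y$ is restricted to $C_{T^u}$ and hence need not be contiguous---however your conclusion only uses that every element of $Y$ is a proper descendant of $lca(u,v)$, which is correct.
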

	\begin{proof}
	    Let $w_1$ be the first vertex visited in Case 3 a) and let $w_2$ be the second. Then $w_1$ is the vertex $y$ and $w_2$ is its parent. Since $y$ is, by definition, not an ancestor of $v$, it lies on the path from $u$ to $lca(u, v)$ and is not equal to $lca(u, v)$. Since $w_2$ is the parent of $y$, it is clearly the next vertex on $P(u, lca(u, v))$.
	\end{proof}
	
	\begin{lemma}\label{case-3-b)-lemma}
	    Let $u$ and $v$ be vertices of $G$ such that $u$ is not an ancestor or descendant of $v$. Suppose the set $X$, as defined in Case 3 of the routing algorithm, is non-empty. Consider the vertices visited after executing Case 3 b) of the routing algorithm. These vertices are on the path from $lca(u, v)$ to $v$. Moreover, these vertices are visited in the order they appear on this path.
	\end{lemma}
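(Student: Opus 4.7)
The plan is to argue directly from the definition of Case 3 b) together with the decomposition of the ancestor path of $v$ at $lca(u,v)$. Write $l := lca(u,v)$; since $u$ is neither an ancestor nor a descendant of $v$, we have $l \ne u$ and $l \ne v$. The two visited vertices are $x$, the deepest vertex in $X$, and $x'$, the child of $x$ that is an ancestor of $v$. I need to show both lie on $P(l, v)$, with $x$ preceding $x'$.

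The first step is to show that $x \in P(l, v)$ and $x \ne l$. Since $x$ is an ancestor of $v$, it lies on $P(rt(T), v)$, which decomposes as $P(rt(T), l)$ concatenated with $P(l, v)$. Every vertex on $P(rt(T), l)$ is, however, an ancestor of $u$ as well, so no such vertex can belong to $X$. Hence $x$ must lie on $P(l, v)$ strictly below $l$. I would also rule out $x = v$: otherwise $v \in C_{T^u}$, and since $u \in C_{T^u}$ and the spanner contains the complete graph on $C_{T^u}$, $v$ would be a neighbour of $u$, meaning Case 0 would have fired rather than Case 3.

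The second step is then immediate. Because $x$ is a strict ancestor of $v$, the path $P(x, v)$ has at least one edge, and its second vertex is the unique child of $x$ on $P(x, v)$, which is exactly $x'$. Therefore $x' \in P(x, v) \subseteq P(l, v)$ and $x'$ appears immediately after $x$ on this path, giving the ordering claim. The only mildly delicate point is the check that $x \ne v$, which needs the Case 0 test; otherwise the entire argument follows from the defining property of $X$ and the concatenation structure of $P(rt(T), v)$ at $l$. No appeal to Lemmas \ref{at-most-two-entrances}--\ref{u-descendant-of-v} is required.
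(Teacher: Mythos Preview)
Your argument is correct and follows essentially the same route as the paper's proof: both show that $x$ is an ancestor of $v$ but not of $u$, hence a proper descendant of $lca(u,v)$ lying on $P(lca(u,v),v)$, and that $x'$ is by definition the next vertex on this path. Your extra check that $x\ne v$ (via Case~0) is a small clarification the paper omits, but otherwise the two proofs coincide.
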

	\begin{proof}
	    Let $w_1$ be the first vertex visited in Case 3 b) and let $w_2$ be the second. $w_1$ is an ancestor of $v$ which is not an ancestor of $u$. It follows that $w_1$ is a descendant of $lca(u,v)$. By definition of Case 3 b), $w_2$ is the child of $w_1$ which is an ancestor of $v$. It is clear that $w_2$ is the successor to $w_1$ on the path $P(lca(u, v), v)$. The lemma follows.
	\end{proof}
	
	Lemmas \ref{always deeper after case 1}, \ref{always higher after case 2}, \ref{case-3-a)-lemma} and \ref{case-3-b)-lemma} imply the following:
	
	\begin{lemma}\label{edge-on-path}
    	Let $u$ and $v$ be vertices of $G$ and let $P(u, v)=(u=x_1,...,x_p=v)$. Suppose the routing steps of a single case of the routing algorithm are executed and vertices $w_1$ and $w_2$ are visited. Then there are indices $1\le i_1\le i_2 \le p$ such that $w_1 = x_{i_1}$ and $w_2=x_{i_2}$.
	\end{lemma}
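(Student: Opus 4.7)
The plan is to perform a straightforward case analysis on which routing case of the algorithm applies, and in each case reduce the statement to one of Lemmas \ref{always deeper after case 1}, \ref{always higher after case 2}, \ref{case-3-a)-lemma} or \ref{case-3-b)-lemma} that have just been established. No new tree-structural argument is needed; the content of the lemma is essentially a bookkeeping observation that packages the four preceding case lemmas into a single uniform statement.

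First I would dispatch the trivial Case 0, where $v$ is a neighbour of $u$; here only one vertex is visited, namely $v = x_p$, and the statement holds by taking $i_1 = i_2 = p$. For Case 1 ($u$ is an ancestor of $v$) and Case 2 ($u$ is a descendant of $v$), the path $P(u,v)$ is itself a root-to-leaf or leaf-to-root segment of $T$, and Lemma \ref{always deeper after case 1} and Lemma \ref{always higher after case 2} respectively assert directly that the two visited vertices lie on $P(u,v)$ in the correct order; it remains only to translate ``on $P(u,v)$ in order'' into the existence of indices $i_1 \le i_2$.

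For Case 3, the key observation is that $P(u,v)$ decomposes as the concatenation of $P(u, \operatorname{lca}(u,v))$ and $P(\operatorname{lca}(u,v), v)$; write $\operatorname{lca}(u,v) = x_q$ so that the first segment is $(x_1,\dots,x_q)$ and the second is $(x_q,\dots,x_p)$. In Case 3 a), Lemma \ref{case-3-a)-lemma} tells us both visited vertices lie on $P(u, \operatorname{lca}(u,v))$ in order, hence both appear in the prefix $(x_1,\dots,x_q)$ with increasing indices. Symmetrically, in Case 3 b), Lemma \ref{case-3-b)-lemma} places both visited vertices on $P(\operatorname{lca}(u,v), v)$ in order, so they appear in the suffix $(x_q,\dots,x_p)$ with increasing indices. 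In either subcase the required indices $i_1 \le i_2$ exist.

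I do not expect a real obstacle. The only mildly delicate point is to argue carefully that Case 3's two visited vertices truly sit on one side of $\operatorname{lca}(u,v)$ rather than straddling it, but this is exactly what Lemmas \ref{case-3-a)-lemma} and \ref{case-3-b)-lemma} guarantee from the definition of the sets $X$ and $Y$ (elements of $Y$ are ancestors of $u$ but not of $v$, hence strictly below $\operatorname{lca}(u,v)$ on the $u$-side; elements of $X$ are ancestors of $v$ but not of $u$, hence strictly below $\operatorname{lca}(u,v)$ on the $v$-side). With that observation the proof reduces to one or two lines per case.
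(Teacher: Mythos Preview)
Your proposal is correct and is essentially the paper's own approach: the paper simply states that Lemma \ref{edge-on-path} is implied by Lemmas \ref{always deeper after case 1}, \ref{always higher after case 2}, \ref{case-3-a)-lemma} and \ref{case-3-b)-lemma}, and you have merely spelled out the case analysis (including the trivial Case~0 and the $P(u,v)=P(u,\mathrm{lca}(u,v))\cdot P(\mathrm{lca}(u,v),v)$ decomposition for Case~3) that makes this implication explicit.
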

	
	Using Lemma \ref{edge-on-path}, we can show the routing algorithm has a routing ratio of 1.
	
	\begin{theorem}\label{routing-ratio}
	    Let $u$ and $v$ be vertices of $G$. Let $\delta_{T}(u, v)$ denote the length of the path from $u$ to $v$ in $T$. The routing algorithm described above is guaranteed to terminate after a finite number of steps and the length of the path traversed is exactly $\delta_{T}(u, v)$.
	\end{theorem}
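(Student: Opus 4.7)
The plan is to lift Lemma~\ref{edge-on-path} from a single case execution to the entire run of the routing algorithm and then to use the fact that every edge of $G$ encodes a tree distance. Let $u = u_0, u_1, \ldots, u_m$ denote the sequence of vertices visited by the algorithm. I will show that the $u_i$'s lie on $P(u, v)$ in the order they appear there, that $u_m = v$, and finally that the total length of the traversed path equals $\delta_T(u, v)$.

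For termination and the ordering, I would go through the cases and verify that each execution of Case~1, 2, 3a, or 3b ends with the current vertex strictly further along $P(u, v)$ toward $v$ than at the start of the case (Case~0 terminates in a single hop). Lemma~\ref{edge-on-path} already places both visited vertices $w_1$ and $w_2$ on $P(u, v)$ in order; strict progress of $w_2$ is then immediate from the definition of the second hop. In Case~1 the second hop lands on a child of $x$ that is an ancestor of $v$, hence strictly deeper than $x$ on $P(u,v)$; in Case~2 it is the parent of $x$, strictly higher on $P(u,v)$; in Cases~3a and 3b it is the parent of $y$ and a child of $x$ on the branch toward $v$, respectively. Since $P(u, v)$ contains finitely many vertices and the remaining suffix strictly shrinks after every case, the algorithm halts after finitely many steps, and it can only halt at $v$.

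For the routing ratio, I would use that $G$ is a subgraph of the tree metric $M_T$, so every edge of $G$---whether a tree edge of $T$ or a shortcut on a cut-vertex set $C_{T'}$---has weight equal to the tree distance between its endpoints. Since $u_0, u_1, \ldots, u_m$ lie in order on $P(u, v)$, the length of the traversed path in $G$ equals
\[
    \sum_{i=0}^{m-1} d_T(u_i, u_{i+1}) \;=\; d_T(u_0, u_m) \;=\; \delta_T(u, v),
\]
where the first equality uses the tree-distance observation and the second telescopes because the $u_i$'s are collinear on the tree path.

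The step I expect to require the most care is the strict-progress verification in the degenerate situations where the first of the two hops of a case happens to be a self-hop, for instance when $u$ is itself the deepest element of $X$ in Case~1 or the highest element of $Y$ in Case~3a. In each such situation I would need to unpack the definitions of $x$ and $y$ carefully to confirm that the prescribed second hop is a well-defined neighbour of $u$ in $G$ and strictly advances the current position along $P(u, v)$. Once these boundary situations are dispatched, the rest of the argument is essentially book-keeping.
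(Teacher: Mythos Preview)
Your proposal is correct and follows essentially the same argument as the paper: invoke Lemma~\ref{edge-on-path} to see that the visited vertices form a subsequence of $P(u,v)$, check that each case makes strict progress so the algorithm terminates at $v$, and then telescope using the tree-metric weight of each edge. Your version is in fact a bit more explicit than the paper's about the strict-progress verification and the degenerate self-hop situations, but these are refinements of the same proof rather than a different approach.
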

	
	\begin{proof}
         Lemma \ref{edge-on-path} implies that the vertices visited when any case of the routing algorithm is executed are on the path from the current vertex to the destination and they are explored in the order they appear on the path. Each case leads to at least one new vertex being explored. If $P=(u=x_1,x_2,...,x_p=v)$ is the path from $u$ to $v$ in $T$, the arguments made above imply that the path traversed by the routing algorithm is of the form $P'=(x_{i_1}, x_{i_2},...,x_{i_l})$ where $1\le i_1<i_2<...<i_l\le p$, i.e., $P'$ is a sub-sequence of $P$. In the tree metric $M_{T}$, the weight of the edge $(x_{i_j}, x_{i_{j+1}})$ is by definition equal to the weight of the path $(x_{i_j}, x_{i_j+1}, ..., x_{i_{j+1}})$ and so the theorem follows.
    \end{proof}

	We now argue that the routing algorithm is guaranteed to terminate after traversing $O(\log n)$ edges. To this end, we first prove the following lemma.

	\begin{lemma}\label{argument-about-canonical-sequences-in-case-1-and-case-2}
 	    Let $u$ and $v$ be vertices of $T$ such that $u$ is either an ancestor or a descendant of $v$. Let $u'$ be the vertex reached after executing the routing steps of either Case 1 or Case 2 when routing from $u$ to $v$. Then the following statements hold:
	    \begin{enumerate}
	        \item If $S_u$ is a prefix of $S_v$, then $|S_{u'}| > |S_u|$. Moreover, either $S_{u'} = S_v$ or $S_{u}'$ is a prefix of $S_v$.
	        \item If $S_v$ is a prefix of $S_u$, then $|S_{u'}| < |S_u|$. Moreover, either $S_{u'} = S_v$ or $S_v$ is a prefix of $S_{u'}$.
	        \item Suppose $S_u$ and $S_v$ share a common prefix $S$ of length $m < \min\{|S_u|, |S_v|\}$. Then $|S_{u'}| < |S_u|$. Moreover, either $S_{u'} = S$ or $S$ is a prefix of $S_{u'}$
	    \end{enumerate}
	\end{lemma}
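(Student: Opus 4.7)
The proof proceeds by analyzing each of the three parts separately; in every case the main task is to identify the canonical subtree $T^{u'}$ containing $u'$, after which the claimed relationships between $S_{u'}$, $S_u$, and $S_v$ follow by a short calculation on canonical sequences. The organizing observation is that $S_u$ is a proper prefix of $S_v$ exactly when $T^v \subsetneq T^u$, which (since the routing algorithm would have used Case~0 if $v \in C_{T^u}$) is equivalent to $v \in T^u$; so Part~1 corresponds to $v \in T^u$, whereas Parts~2 and~3 correspond to $v \notin T^u$. This dichotomy drives the two structural arguments below.

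For Part~1, $v$ lies in a unique tree $T' \in T^u \setminus C_{T^u}$. In Case~1 ($u$ an ancestor of $v$) I would apply Lemma~\ref{root-of-canonical-subtree} to conclude that the algorithm's step takes us to $u' = rt(T')$; in Case~2 ($u$ a descendant of $v$) Lemma~\ref{u-descendant-of-v} tells us that $u'$ is either $rt(T')$ or $l(T')$. In either case $u' \in C_{T'}$, so $T^{u'} = T'$, and by the construction of canonical sequences $S_{u'} = S_u \cdot j$ for some index $j$, yielding $|S_{u'}| = |S_u| + 1$. Moreover $T^v$ is either $T'$ itself or a canonical subtree nested within $T'$, so $S_{u'}$ is a prefix of $S_v$, and equals $S_v$ precisely when $T' = T^v$.

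For Parts~2 and~3 we have $v \notin T^u$, so the path $P(u,v)$ must leave $T^u$. By Lemma~\ref{at-most-two-entrances} the exit can only occur at $rt(T^u)$ or $l(T^u)$, and the first vertex of $P(u,v)$ outside $T^u$ lies in $C_{T^{**}}$, where $T^{**}$ denotes the canonical subtree with $T^u \in T^{**} \setminus C_{T^{**}}$. In Case~1 the path descends, so it exits through $l(T^u)$: the ancestors of $v$ that lie in $T^u$ all sit on the leftmost path from $rt(T^u)$ to $l(T^u)$, and since $l(T^u) \in C_{T^u}$ by construction, the deepest element of $X$ is precisely $l(T^u)$, whose child along $P(u,v)$ is the exit vertex. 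In Case~2 the path ascends and exits through $rt(T^u)$, which is manifestly the highest element of $C_{T^u}$ that is both a descendant of $v$ and an ancestor of $u$, so the algorithm chooses $x = rt(T^u)$ and then steps to its parent in $T$. In either subcase $u' \in C_{T^{**}}$, so $T^{u'} = T^{**}$ and $S_{u'}$ is the length-$(|S_u|-1)$ prefix of $S_u$. The remaining claims follow by elementary prefix reasoning: for Part~2, $|S_v| < |S_u|$ together with $S_v$ being a prefix of $S_u$ forces $S_v$ to be a prefix of $S_{u'}$; for Part~3, the common prefix $S$ has length $m < |S_u|$, so $S$ is a prefix of $S_{u'}$ as well.

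The step I expect to be most delicate is the Case~1 identification $x = l(T^u)$ used in Parts~2 and~3. This requires confirming that every ancestor of $v$ lying in $T^u$ belongs to the leftmost path from $rt(T^u)$ down to $l(T^u)$, and that $l(T^u)$ itself is always included in $C_{T^u}$; both points are essentially packaged by the construction of $C_{T^u}$ together with Lemma~\ref{at-most-two-entrances}, which forces a descending path to leave $T^u$ through its unique downward exit vertex.
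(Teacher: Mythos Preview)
Your proposal is correct and follows essentially the same architecture as the paper's proof: Part~1 is handled via Lemmas~\ref{root-of-canonical-subtree} and~\ref{u-descendant-of-v} exactly as in the paper, and Parts~2 and~3 are treated together by showing $u'\in C_{T^{**}}$ where $T^{**}$ is the canonical parent of $T^u$, with the Case~2 argument ($x=rt(T^u)$, then step to its parent) matching the paper verbatim. The only substantive difference is in Case~1 of Parts~2/3: the paper invokes Lemma~\ref{u-tree-in-v-tree-u-ancestor} as a black box to identify $x$ as the last vertex of $P(u,v)$ in $T^u$, whereas you argue directly that $x=l(T^u)$ by observing that a descending path can only exit $T^u$ through $l(T^u)$ and that $l(T^u)\in C_{T^u}$. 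Your direct argument is sound (indeed, the hypothesis that $u$ is an ancestor of some $v\notin T^u$ forces $u$ to lie on the leftmost path, so all ancestors of $v$ inside $T^u$ do sit on that path), and amounts to re-deriving the relevant content of Lemma~\ref{u-tree-in-v-tree-u-ancestor} in situ rather than citing it.
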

	\begin{proof}
	    We begin with the first statement. Suppose $S_u$ is a prefix of $S_v$. Then $T^u$ contains $T^v$. Let $T'$ be the canonical subtree in the forest $T^u\setminus C_{T^u}$ which contains $T^v$. By definition, the canonical sequence of any cut vertex of $T'$ can be obtained by appending some integer to $S_u$. Since $T^v$ is contained in $T'$, the canonical sequence associated to $T'$ is a prefix of $S_v$. Then it is sufficient to show that $u'\in C_{T'}$. Suppose $u$ is an ancestor of $v$ so that the algorithm executes the routing steps of Case 1. Recall that in Case 1 the set $X$ is defined as the set of vertices in $C_{T^u}$ which are ancestors of $v$ and the vertex $x$ is defined as the deepest vertex in $X$. Then by definition of Case 1, $u'$ is the child of $x$ which is an ancestor of $v$. By Lemma \ref{root-of-canonical-subtree}, $u'=rt(T')$. Since $rt(T')\in C_{T'}$, the statement of the lemma follows in this case. Now suppose $u$ is a descendant of $v$ so that the algorithm executes the routing steps of Case 2. Recall that in Case 2, $X$ is defined to be the set of vertices in $C_{T^u}$ which are descendants of $v$ and ancestors of $u$ and $x$ is defined to be the highest vertex in $X$. By definition of Case 2, the algorithm routes to $x$ and then to the parent of $x$. Then $u'$ is the parent of $x$. By Lemma \ref{u-descendant-of-v}, $u'\in \{l(T'), rt(T')\}$. Since $\{l(T'), rt(T')\}\subseteq C_{T'}$, the first statement of the lemma holds in this case.
	    
	    We now address the second and third statements of the lemma. Suppose $S_u$ is not a prefix of $S_v$. Let $S$ be the longest common prefix of $S_u$ and $S_v$. Then either $S=S_v$ or $S$ is a prefix of $S_v$. Let $T''$ be the canonical subtree corresponding to the canonical sequence $S$. Observe that $T''$ contains $T^u$. Let $T'$ be the canonical subtree such that $T^u\in T'\setminus C_{T'}$. Then either $T''=T'$ or $T''$ contains $T'$. Note that the canonical sequence of any cut vertex of $T'$ is a prefix of $S_u$. Moreover, for any canonical sequence $S'$ of a cut vertex in $T'$, either $S=S'$ or $S$ is a prefix of $S'$. We claim that $u'\in C_{T'}$. When $S_v$ is a prefix of $S_u$, we see this implies the second statement. When $S_u$ and $S_v$ share a prefix of length $m < \min\{|S_u|, |S_v|\}$, we see that our claim implies the third statement.
	    
	    We now prove the claim. Suppose that $u$ is an ancestor of $v$ so that the algorithm executes the routing steps of Case 1. Let $X$ and $x$ be as defined in Case 1. Then $u'$ is the child of $x$ which is an ancestor of $v$. By Lemma \ref{u-tree-in-v-tree-u-ancestor}, $x$ is the last vertex on the path from $u$ to $v$ which is contained in $T$. Since $x$ is an ancestor of $v$ and $u'$ is both a child of $x$ and ancestor of $v$, it is clear that $u'$ is the next vertex on $P(u, v)$. Since $u$ is a vertex outside $T'$ connected to a vertex in $T^u$, we see that $u'\in C_{T'}$ and so the claim holds in this case. Suppose now that $u$ is a descendant of $v$ so that the algorithm executes the routing steps of Case 2. Let $X$ and $x$ be as defined in Case 2. Note that since $u$ is a descendant of $v$, $rt(T^u)$ must also be a descendant of $v$. Then $rt(T^u)\in X$. Since $rt(T^u)$ must be the highest vertex in $X$, we see that $x=rt(T^u)$. Since the parent of $rt(T^u)$ is clearly a member of $C_{T'}$, the claim holds. This completes the proof of the lemma. 
	    
	\end{proof}
	
	Note that if $u$ is an ancestor (resp. descendant) of $v$, then by Lemma \ref{edge-on-path} the vertex reached after executing the steps of Case 1 (resp. Case 2) will also be an ancestor (resp. descendant) of $v$.
	
	\begin{lemma}\label{logn-steps-when-above-or-below}
	    Suppose $u$ and $v$ in $G$ are such that $u$ is an ancestor or descendant of $v$ in $T$. Then, when routing from $u$ to $v$, the routing algorithm reaches $v$ after traversing $O(\log n)$ edges.
	\end{lemma}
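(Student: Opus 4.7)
The plan is to measure progress of the routing algorithm by comparing the canonical sequence $S_u$ of the current vertex to that of the destination $S_v$, and to exploit Lemma \ref{argument-about-canonical-sequences-in-case-1-and-case-2} to show this comparison strictly improves at each step. First, by Lemma \ref{elkin-solomon-lemma} the spanner's recursion depth is $O(\log n)$, so every canonical sequence has length $O(\log n)$. Second, by Lemmas \ref{always deeper after case 1} and \ref{always higher after case 2}, when $u$ is an ancestor (resp.\ descendant) of $v$, the vertex reached after executing Case 1 (resp.\ Case 2) still lies on $P(u,v)$ and remains an ancestor (resp.\ descendant) of $v$ unless it equals $v$. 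Thus during the entire routing only Case 0, Case 1, or Case 2 can fire.

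I introduce the potential $\phi(u) := |S_u| + |S_v| - 2\ell$, where $\ell$ is the length of the longest common prefix of $S_u$ and $S_v$. Since $\ell \le \min(|S_u|, |S_v|)$, $\phi$ is a non-negative integer, and initially $\phi \le |S_u| + |S_v| = O(\log n)$. I then unpack the three statements of Lemma \ref{argument-about-canonical-sequences-in-case-1-and-case-2}: under statement 1, $\ell$ grows from $|S_u|$ to $|S_{u'}|$ and $|S_u|$ grows by the same amount, so $\phi$ drops by $|S_{u'}| - |S_u| \ge 1$; under statement 2, $\ell$ remains at $|S_v|$ while $|S_u|$ strictly decreases, so $\phi$ drops by $|S_u| - |S_{u'}| \ge 1$; under statement 3, $|S_u|$ strictly decreases while $\ell$ can only grow (since the common prefix $S$ of $S_u$ and $S_v$ is still a prefix of $S_{u'}$), so $\phi$ again drops by at least one.

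The subtlety I expect to state most carefully is statement 3, where one might worry that $\ell$ could overshoot $|S|$ after the step in a way that interacts badly with the potential. The point is simply that any growth in the common prefix enters $\phi$ with a coefficient of $-2$ and therefore only helps, while the guaranteed drop $|S_u|-|S_{u'}| \ge 1$ suffices on its own to decrease $\phi$ by at least one. Once $\phi$ reaches $0$ we have $S_{u'} = S_v$, so $u'$ and $v$ are cut vertices of a common canonical subtree and are joined by a spanner edge, triggering Case 0.

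Combining these observations, Case 1 or Case 2 can execute at most $O(\log n)$ times before $S_{u'}=S_v$ forces termination via Case 0. Each execution of a case contributes at most two traversed edges, so the total number of hops is $O(\log n)$, as claimed.
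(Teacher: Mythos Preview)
Your proof is correct and follows essentially the same approach as the paper: both arguments rely on Lemma~\ref{argument-about-canonical-sequences-in-case-1-and-case-2} together with the $O(\log n)$ bound on canonical-sequence length from Lemma~\ref{elkin-solomon-lemma}. The paper splits into a two-phase argument (first drive the current sequence down until one of $S_u,S_v$ is a prefix of the other, then converge), whereas your potential $\phi(u)=|S_u|+|S_v|-2\ell$ unifies all three cases of Lemma~\ref{argument-about-canonical-sequences-in-case-1-and-case-2} into a single monotone-decrease bound; this is a cosmetic repackaging rather than a different route.
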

	
	\begin{proof} 
     Suppose that $u$ is an ancestor of $v$ (the argument in the case where $u$ is a descendant of $v$ is symmetric) and let $K$ be the maximum length of the canonical sequence of any vertex of $G$. Note that by Lemma \ref{elkin-solomon-lemma}, each canonical subtree is at most $2/k$ times the size of its smallest containing canonical subtree and so the recursion depth is at most $O(\log n)$ which, by definition of canonical sequences, implies $K=O(\log n)$. If $S_u$ is a prefix of $S_v$, by Lemma \ref{argument-about-canonical-sequences-in-case-1-and-case-2}, the routing steps executed in a single iteration of Case 1 traverse at most two edges and lead to a vertex $u'$ which is an ancestor of $v$ and for which $S_{u'}$ is a prefix of $S_v$ strictly longer than $S_u$. Then after traversing at most $2\cdot K$ edges, the routing algorithm will reach $v$. An analogous argument can be made in the case where $S_v$ is a prefix of $S_u$.
     \\
     Assume that $S_u$ and $S_v$ have a common prefix of length $m < \min\{|S_u|, |S_v|\}$. By Lemma \ref{argument-about-canonical-sequences-in-case-1-and-case-2}, after traversing at most $2\cdot K$ edges, the algorithm reaches a vertex $u'$ which is an ancestor of $v$ and is such that $S_{u'}$ is a prefix of $S_v$. By the argument made above, after traversing at most another $2\cdot K$ edges, the routing algorithm reaches $v$. Then in all cases, after traversing at most $4\cdot K$ edges, the algorithm reaches its destination. This completes the proof of the lemma.
    \end{proof}

	Consider the case where $u$ is neither an ancestor nor a descendant of $v$. The following lemma shows that in this case, the algorithm either routes to a vertex on the path $P(lca(u, v), v)$ or it follows the routing steps that would be executed if the algorithm were routing from $u$ to $lca(u, v)$.
	
	\begin{lemma}\label{case-2-simulation}
	    Let $u$ and $v$ be vertices of $G$ such that $lca(u, v)\notin \{u, v\}$. Suppose that the set $X$ as defined in Case 3 is empty so that the algorithm executes the routing steps of Case 3 a) when routing from $u$ to $v$. The same steps would be performed when routing from $u$ to $lca(u, v)$. 
	\end{lemma}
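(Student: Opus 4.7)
The plan is to compare the vertices visited by Case 3 a) when routing from $u$ to $v$ with those visited by the algorithm when the destination is $lca(u, v)$. Since $u$ is a strict descendant of $lca(u, v)$ in $T$, routing from $u$ to $lca(u, v)$ invokes Case 2. Let $y$ be the vertex defined in Case 3 a) and let $x$ be the corresponding vertex that Case 2 would pick for destination $lca(u, v)$; because both algorithms then move to the parent of this vertex, it is enough to establish $y = x$, from which $\mathrm{parent}(y) = \mathrm{parent}(x)$ is immediate.

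First I would unpack the set definitions. The set $Y$ from Case 3 is precisely the set of cut vertices of $C_{T^u}$ lying on the path $P(u, lca(u, v))$ strictly below $lca(u, v)$, whereas the set $X'$ that Case 2 would use (with destination $lca(u, v)$) is the set of cut vertices of $C_{T^u}$ on $P(u, lca(u, v))$ \emph{including} $lca(u, v)$ itself. Thus $Y = X'$ exactly when $lca(u, v) \notin C_{T^u}$, in which case $y$ (the highest vertex in $Y$) coincides with $x$ (the highest in $X'$).

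The crux is therefore to deduce $lca(u, v) \notin C_{T^u}$ from the hypothesis that the Case 3 set $X$ is empty. I would argue by contradiction: assume $lca(u, v) \in C_{T^u}$, and let $c$ be the child of $lca(u, v)$ in $T$ on the path toward $v$. If $c \in C_{T^u}$, then $c$ is an ancestor of $v$, not an ancestor of $u$, and lies in $C_{T^u}$, so $c$ belongs to the Case 3 set $X$, a contradiction. If $c \notin C_{T^u}$, then Lemma~\ref{root-of-canonical-subtree} gives $c = rt(T'')$ for some canonical subtree $T'' \in T^u \setminus C_{T^u}$; following the path from $c$ toward $v$ until it first leaves $T''$, Lemma~\ref{at-most-two-entrances} applied at that exit edge forces the outside endpoint to lie in $C_{T^u}$, furnishing another element of $X$ and again contradicting $X = \emptyset$. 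The main obstacle is the scenario in which the path from $c$ to $v$ never exits $T''$ (equivalently, $v \in V(T'')$): here Lemma~\ref{at-most-two-entrances} does not apply directly, and one must either iterate the argument recursively on the inner canonical subtree containing $v$, or exploit the ``leftmost is largest'' convention together with the constraints on the locations of $l(T^u)$ and $rt(T^u)$ to extract the required witness.
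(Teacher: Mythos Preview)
Your overall plan coincides with the paper's: compare the set $Y$ of Case~3 with the set $X'$ that Case~2 would form for destination $lca(u,v)$, observe both cases then route to the highest element and then to its parent, and conclude the steps agree. The paper simply asserts the biconditional ``an ancestor of $u$ is a descendant of $lca(u,v)$ if and only if it is not an ancestor of $v$'' and deduces $X'=Y$ in one line, without singling out the vertex $lca(u,v)$. You are right that, under the paper's own convention that a vertex is a descendant of itself, this biconditional is delicate precisely at $lca(u,v)$; the paper's proof does not address this and implicitly treats the relations as strict there.

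That said, your attempt to close the boundary case has a genuine gap, which you yourself flag. In the branch where $c\notin C_{T^u}$ and the path from $c$ toward $v$ never leaves $T''$ (i.e.\ $v\in V(T'')$), neither of your proposed repairs works as stated. ``Iterating recursively on the inner canonical subtree containing $v$'' would only produce a witness in some deeper cut set $C_{T'''}$, whereas membership in the Case~3 set $X$ requires the witness to lie in $C_{T^u}$; cut sets of distinct canonical subtrees are disjoint, so recursion cannot manufacture an element of $C_{T^u}$. Likewise, invoking the leftmost--is--largest convention and the locations of $rt(T^u)$, $l(T^u)$ does not by itself force any vertex of $C_{T^u}$ onto the open path from $lca(u,v)$ to $v$: in the problematic scenario that path lies entirely inside a single tree of $T^u\setminus C_{T^u}$, and the structural lemmas you cite (Lemmas~\ref{at-most-two-entrances} and~\ref{root-of-canonical-subtree}) only speak about edges crossing the boundary of such a tree, not about its interior. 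So as written the contradiction argument is incomplete, and the proposal does not establish the lemma; you would either need a different argument ruling out $lca(u,v)\in C_{T^u}$ under $X=\emptyset$, or a direct verification that even in that case the two routings' first hop agrees.
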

	
	 \begin{proof}
         Recall the set $Y$ in Case 3 is defined as the set of ancestors of $u$ which are not ancestors of $v$. Consider Case 2 when routing from $u$ to $lca(u, v)$. The set $X$ in Case 2 is defined as the set of ancestors of $u$ which are descendants of $lca(u, v)$. Since an ancestor of $u$ is a descendant of $lca(u, v)$ if and only if it is not an ancestor of $v$, we see that $X=Y$. In Case 3 a) the algorithm routes to the highest vertex in $Y$ and then its parent. In Case 2 the algorithm routes to the highest vertex in $X$ and then its parent. We see the algorithm executes the same routing steps in both cases and so the lemma follows.
    \end{proof}
	
	Using Lemmas \ref{logn-steps-when-above-or-below} and \ref{case-2-simulation}, we establish the main result of this section.
	
	\begin{theorem}\label{main-theorem}
	    Let $u$ and $v$ be vertices in $G$. The routing algorithm reaches $v$ when routing from $u$ after traversing at most $O(\log n)$ edges.
	\end{theorem}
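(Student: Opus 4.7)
The plan is to reduce Theorem \ref{main-theorem} to Lemma \ref{logn-steps-when-above-or-below} by carefully structuring the routing run from $u$ to $v$ into at most three phases, where the hard case is when $u$ is neither an ancestor nor descendant of $v$.

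First, if $u$ is an ancestor or descendant of $v$, Lemma \ref{logn-steps-when-above-or-below} directly gives the $O(\log n)$ bound, so I would dispose of this case in one sentence. For the remaining case, let $w = lca(u,v)$; by assumption $w \notin \{u,v\}$. I would then analyze the run by tracking which sub-case of Case~3 is executed at each step, and partition the run into three phases in the order they occur.

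\textbf{Phase A (Case 3a iterations).} While the algorithm keeps executing Case 3a, Lemma~\ref{case-2-simulation} says that the moves performed are exactly the moves Case~2 would perform if we were instead routing from $u$ to $w$. Crucially, as long as Case 3a is selected, the choice depends only on the ancestor/descendant relationship between the current vertex and $\{u,v\}$, and by Lemma~\ref{case-3-a)-lemma} the current vertex remains on $P(u,w)$ and monotonically approaches $w$. Hence the sequence of current vertices produced by consecutive Case 3a steps coincides with the sequence produced by running Case~2 routing from $u$ toward $w$. Applying Lemma~\ref{logn-steps-when-above-or-below} to this hypothetical Case~2 run bounds the number of Case 3a iterations (and hence the edges traversed in Phase A) by $O(\log n)$. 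Phase A terminates when one of the following happens: (i) the current vertex becomes $w$ (and hence an ancestor of $v$), so the next step is in Case~1; or (ii) the set $X$ of Case~3 becomes non-empty, triggering Case 3b.

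\textbf{Phase B (a single Case 3b step, if any).} If Phase A ends with (ii), by Lemma~\ref{case-3-b)-lemma} the two vertices visited in Case 3b lie on $P(w,v)$ in order, and the second of them is an ancestor of $v$. This phase contributes exactly two edges. \textbf{Phase C (ancestor-to-$v$ routing).} After Phase B (or after (i) in Phase A), the current vertex is an ancestor of $v$, so every subsequent step falls into Case~0 or Case~1. Lemma~\ref{logn-steps-when-above-or-below}, applied with this current vertex as the new source, bounds the remaining number of edges by $O(\log n)$. Summing the three phases gives a total of $O(\log n)$ edges, which is what we want.

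The main obstacle is justifying rigorously that Phase A really does simulate a Case~2 run: Lemma~\ref{case-2-simulation} is stated only for a single step from the original $u$, so I need to observe that after each Case 3a iteration the new current vertex $u'$ still satisfies the hypotheses of that lemma with respect to the same destination $v$ (namely $lca(u',v)=w$ and the corresponding $X$ set is still empty precisely when Case 3a fires again), so the lemma applies inductively and the identification with a Case~2 run from $u$ to $w$ persists throughout Phase A. Once this bookkeeping is in place, the rest is a straightforward assembly of the three phases.
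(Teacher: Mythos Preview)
Your proposal is correct and follows essentially the same three-phase decomposition as the paper's proof: handle the ancestor/descendant case by Lemma~\ref{logn-steps-when-above-or-below}, use Lemma~\ref{case-2-simulation} to identify the run of Case~3a steps with a Case~2 run toward $lca(u,v)$ (bounded via Lemma~\ref{logn-steps-when-above-or-below}), observe that a single Case~3b step lands on an ancestor of $v$, and finish with another application of Lemma~\ref{logn-steps-when-above-or-below}. Your explicit remark that Lemma~\ref{case-2-simulation} must be applied inductively (since $lca(u',v)=lca(u,v)$ after each Case~3a step) is a detail the paper leaves implicit, but otherwise the arguments coincide.
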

	
\begin{proof}
     Lemma \ref{logn-steps-when-above-or-below} implies the theorem when $u$ is an ancestor or descendant of $v$. Then we need only consider the case where $u$ is neither an ancestor nor descendant of $v$. Suppose the algorithm executes the steps of Case 3 a) at some point when routing from $u$ to $v$. Let $u'$ be the current vertex after the steps of Case 3 a) are executed. Then $u'$ is an ancestor of $v$ and so the algorithm will reach $v$ after traversing another $O(\log n)$ edges. It remains to argue that $O(\log n)$ edges are traversed before the current vertex is an ancestor of $v$. By Lemma \ref{case-2-simulation} and Lemma \ref{logn-steps-when-above-or-below}, the algorithm either executes the steps of Case 3 a) or reaches $lca(u, v)$ after traversing $O(\log n)$ edges. This completes the proof of the theorem.
 \end{proof}

\section{Routing in a Doubling Metric Spanner}
\label{app:doubling}
The 1-spanner of the previous section was designed to be a tool for reducing the diameter of Euclidean spanners without compromising bounds on degree and weight. It has been applied in a series of papers on various constructions of spanners of doubling metrics. The doubling dimension of a metric $(M, d)$ is the smallest number $\lambda$ such that, for any $r > 0$, any ball of radius $r$ can be covered by at most $2^\lambda$ balls of radius $r/2$. An $n$-point metric is said to be doubling if the doubling dimension is constant in $n$. The class of doubling metrics contains the class of Euclidean metrics as $m$-dimensional Euclidean metrics can be shown to have doubling dimension $\Theta(m)$. The spanners for doubling and Euclidean metrics to which the 1-spanner shortcutting scheme has been applied are all based on some variety of hierarchical decomposition of the input point set. In particular, spanners based on the dumbbell trees of Arya et al. \cite{short-thin-lanky} and the net tree approach \cite{def-spanner-and-app,fast-construction-of-nets} have been proposed. We will show that the routing algorithm described in the previous section can be extended to a routing algorithm on a low diameter doubling metric $(1+\epsilon)$-spanner based on the net tree.

\subsection{A net tree based spanner}

The spanner we describe here is a simplified version of the spanner described by Chan et al. \cite{doubling-spanner}. The starting point of their construction is based on the spanner of Gao et al. \cite{def-spanner-and-app}.\\

In this section, $(M, d)$ will denote an $n$-point doubling metric. Let $\dim(M)$ denote the doubling dimension of $M$. We assume without loss of generality that the distance between the closest pair of points in $M$ is 1.

\begin{definition}
	For $r>0$, a subset $N\subseteq M$ is said to be an $r$-net for $M$ if \begin{itemize}
		\item $d(x_1, x_2) > r$ for all distinct pairs of points $x_1, x_2\in N$.
		\item For all $y\in M$, there exists $x\in N$ such that $d(x, y)\le r$.
	\end{itemize} 
\end{definition}

The first step of the construction is to compute a certain sequence of nets. Specifically, we compute a nested sequence of sets $M=N_0\supseteq N_1\supseteq ... \supseteq N_{\log D}$ where $N_i$ is a $2^i$-net of $N_{i-1}$ and $D$ is the largest inter-point distance among pairs of points in $M$. Note that $N_{\log D}$ consists of a single point. An efficient method for computing such a sequence of nets is given by Har-Peled and Mendel \cite{fast-construction-of-nets}. Next, we describe the so-called net tree which we denote $\mathcal{T}$. The net tree has a node at level-$i$ for each point in the set $N_i$. If a level-$i$ node corresponds to a point $p\in N_i$, we say that $p$ is the representative of $v$ and write $rep(v)=p$. Note that a given point in $p$ can be the representative of multiple nodes in $\mathcal{T}$. The root of the tree corresponds to the singleton set $N_{\log D}$ and the leaves of $\mathcal{T}$ are in one-to-one correspondence with the points of $M$. The parent of each node $v$ at level-$i$ is a node $w$ at level-($i+1$) for which $d(rep(v), rep(w)) \le 2^{i+1}$. The constant doubling dimension of $M$ immediately implies that nodes in $\mathcal{T}$ have a constant number of children. The tree induces a graph $H$ on $M$ in the natural way, i.e., each edge $(u, v)\in E(\mathcal{T})$ corresponds to the edge $(rep(u), rep(v))\in E(H)$. For a point $p\in M$, we also use $p$ to denote the leaf of $\mathcal{T}$ which has $p$ as its representative. Given a leaf $p$, we denote by $p^{(i)}$ the level-$i$ ancestor of $p$ in $\mathcal{T}$. Given two nodes $u, v\in V(\mathcal{T})$, we define $d(u, v):=d(rep(u), rep(v))$. The graph $H$ as currently defined is not a $(1+\epsilon)$-spanner for $M$. To guarantee a $1+\epsilon$ spanning ratio, certain cross edges are added at each level. Fix a constant $\gamma > 4$. For  each pair of level-$i$ points $u, v\in V(\mathcal{T})$ for which $d(rep(u), rep(v))\le \gamma\cdot2^i$, we add the cross edge $(u, v)$ to $\mathcal{T}$. If we make a sufficiently large choice of $\gamma = O(1/ \epsilon)$, the graph $H$ resulting from the addition of these cross edges to $\mathcal{T}$ can be shown to be a $(1+\epsilon)$-spanner for $M$. The following theorem is established in the proof of Theorem 3.2 in~\cite{def-spanner-and-app}.

\begin{theorem}\label{structure-of-spanning-paths}
	Let $p, q$ be points in $M$ and let $i$ be the smallest integer such that the level-$i$ ancestors of $p$ and $q$ are connected by a cross edge. Consider the path in $\mathcal{T}$ obtained by climbing from $p$ to its level-$i$ ancestor, taking the cross edge to the level-$i$ ancestor of $q$ and descending to $q$. Then the path in $H$ corresponding to this path has total length $(1+\epsilon)\cdot d(p, q)$.
\end{theorem}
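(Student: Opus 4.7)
The plan is to upper bound the length of the climb-cross-descent path in $H$ in terms of $d(p,q)$ plus a small additive error that scales like $2^i$, then lower bound $d(p,q)$ by a term that also scales like $2^i$ using the minimality of $i$, and finally show that the ratio can be made $1+\epsilon$ by choosing $\gamma$ sufficiently large.

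First I would bound the length of the climb from $p$ to its level-$i$ ancestor. By the definition of the parent relation in $\mathcal{T}$, for every $j$ the edge between a level-$j$ node and its parent contributes weight at most $2^{j+1}$ in $H$, so the climb has length at most $\sum_{j=0}^{i-1} 2^{j+1} = 2^{i+1}-2$; the descent from $q^{(i)}$ to $q$ admits an identical bound. The same calculation, combined with the triangle inequality applied to the representatives of $p$, $p^{(i)}$, $q^{(i)}$, $q$, shows that the cross edge itself has weight at most $d(p,q) + 2(2^{i+1}-2)$. Adding these three contributions, the total length of the path is at most $d(p,q)$ plus something of the form $c \cdot 2^i$ for a small constant $c$.

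Next I would use the minimality of $i$ to bound $d(p,q)$ from below. Since the level-$(i-1)$ ancestors of $p$ and $q$ are \emph{not} connected by a cross edge, we must have $d(rep(p^{(i-1)}), rep(q^{(i-1)})) > \gamma \cdot 2^{i-1}$. Applying the triangle inequality again, together with the same telescoping bound on $d(p, p^{(i-1)})$ and $d(q, q^{(i-1)})$ (each at most $2^i - 2$), yields a lower bound of the form $d(p,q) > (\gamma/2 - 2) \cdot 2^i + O(1)$. The constant $\gamma > 4$ assumed in the construction ensures this bound is nontrivially positive.

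Combining the two estimates, the ratio of the path length to $d(p,q)$ is at most $1 + \Theta(1/(\gamma - 4))$, which is $\le 1+\epsilon$ provided $\gamma$ is chosen as $\Theta(1/\epsilon)$, matching the choice $\gamma = O(1/\epsilon)$ declared in the spanner definition. The edge case $i=0$ (a cross edge directly between $p$ and $q$) is trivial since the path then consists of a single edge of weight $d(p,q)$. I expect the main subtlety to be calibrating the constants so that the additive error from the climb/descent and the triangle-inequality slack on the cross edge are both absorbed by the lower bound on $d(p,q)$ without requiring a larger $\gamma$ than $O(1/\epsilon)$; everything else is a direct application of the triangle inequality and geometric summation.
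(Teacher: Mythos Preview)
The paper does not give its own proof of this theorem; it simply attributes the result to the proof of Theorem~3.2 in Gao et al.~\cite{def-spanner-and-app} and moves on. So there is no in-paper argument to compare against.

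That said, your proposal is correct and is essentially the standard argument one finds in the net-tree spanner literature. The geometric-sum bound on the climb and descent, the triangle-inequality bound on the cross-edge weight, and the lower bound on $d(p,q)$ obtained from the absence of a cross edge at level $i-1$ are exactly the three ingredients used in the cited proof. Your constants work out: the total path length is at most $d(p,q)+8\cdot 2^i$, the lower bound gives $d(p,q)>(\gamma/2-2)\cdot 2^i$, and hence the ratio is at most $1+16/(\gamma-4)$, which is $\le 1+\epsilon$ once $\gamma\ge 4+16/\epsilon$. The only cosmetic point is that the theorem statement should be read as ``at most $(1+\epsilon)\cdot d(p,q)$'' rather than exact equality, which you already interpret correctly.
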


The maximum degree of the spanner as described above is $O(\epsilon^{-\dim(M)}\log D)$ as some points in $M$ may represent many nodes in $\mathcal{T}$. Gottlieb and Roddity~\cite{improved-algorithms-for-fully-dynamic} devised a technique to reduce the degree bound to $O(\epsilon^{-dim(M)})$ by carefully reassigning representatives.

The depth of the net tree is $O(\log D)$. In particular, if $D$ is not bounded by a polynomial in $n$, the net tree may have linear depth and so the diameter of the spanner may be linear. We say a subtree of $\mathcal{T}$ is \textit{light} if it has height $\log(\frac{D}{n})$. Chan et al. \cite{doubling-spanner} show that by applying the tree metric construction of the previous section to all light subtrees of $\mathcal{T}$, it is possible to reduce the diameter of the spanner to $O(\log n)$ without increasing the asymptotic bound on the weight. Chan et al. \cite{doubling-spanner} show that the weight of the spanner, after shortcuts are added, is at most $O(\epsilon^{-dim(M)}\log n)\cdot wt(MST(M))$, where $MST(M)$ is a minimum spanning tree of $M$. Chan et al. \cite{doubling-spanner} make additional modifications to ensure fault tolerance although we will not work with this version of the spanner. 

To summarise, the spanner we work with has $(1+\epsilon)$-stretch, $O(\epsilon^{-\dim(M)})$ degree, $O(\epsilon^{-dim(M)}\log n)\cdot wt(MST(M))$ weight and $O(\log n)$ diameter.

For the remainder of this section, $\mathcal{T}^*$ will denote the net tree with cross edges and shortcuts. $H$ denotes the constant degree spanner of $(M, d)$ induced by $\mathcal{T}^*$.
Our routing algorithm will make use of the following property of the spanner~$H$.

\begin{lemma}\label{interval-lemma}
	For $p, q\in M$, the smallest integer $i$ such that the level-$i$ ancestors of $p$ and $q$ are guaranteed to be joined by a cross edge lies in the interval  $$\left[\left\lfloor \log\left(\frac{d(p, q)}{\gamma + 4}\right) \right\rfloor, \left\lceil \log\left(\frac{d(p, q)}{\gamma - 4}\right) \right\rceil \right].$$
\end{lemma}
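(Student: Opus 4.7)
The plan is to bound how far $p^{(i)}$ drifts from $p$ and $q^{(i)}$ drifts from $q$, then apply the triangle inequality to sandwich $d(p^{(i)}, q^{(i)})$ between $d(p,q) - 2^{i+2}$ and $d(p,q) + 2^{i+2}$, and finally translate the cross-edge threshold $d(p^{(i)}, q^{(i)}) \le \gamma \cdot 2^i$ into the two endpoints of the stated interval.

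First I would show by a telescoping sum that $d(p, p^{(i)}) \le \sum_{j=1}^{i} 2^j = 2^{i+1} - 2 < 2^{i+1}$, since by the definition of the net tree the hop from any level-$(j{-}1)$ node to its parent costs at most $2^j$; the same bound holds for $d(q, q^{(i)})$. Two applications of the triangle inequality then yield
\[
    d(p,q) - 2^{i+2} \;<\; d(p^{(i)}, q^{(i)}) \;<\; d(p,q) + 2^{i+2}.
\]

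For the upper endpoint of the interval, recall that a cross edge is added at level $i$ precisely when $d(p^{(i)}, q^{(i)}) \le \gamma \cdot 2^i$. Using the upper estimate above, it suffices that $d(p,q) + 2^{i+2} \le \gamma \cdot 2^i$, i.e., $2^i \ge d(p,q)/(\gamma - 4)$. Choosing $i = \lceil \log(d(p,q)/(\gamma - 4)) \rceil$ satisfies this, so the smallest level at which a cross edge is guaranteed is at most $\lceil \log(d(p,q)/(\gamma - 4)) \rceil$. For the lower endpoint, note that no cross edge can be present at level $i$ when $d(p^{(i)}, q^{(i)}) > \gamma \cdot 2^i$, and the lower estimate forces this whenever $d(p,q) - 2^{i+2} \ge \gamma \cdot 2^i$, equivalently $2^i \le d(p,q)/(\gamma + 4)$. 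Hence any level at which a cross edge must appear satisfies $2^i > d(p,q)/(\gamma + 4)$, so the smallest such level is at least $\lfloor \log(d(p,q)/(\gamma + 4)) \rfloor$.

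The main obstacle is purely bookkeeping: one must lean on the hypothesis $\gamma > 4$ so that both $\gamma - 4$ and $\gamma + 4$ are positive and the logarithms are well-defined, and one should keep careful track of strict versus non-strict inequalities so that the floor and ceiling endpoints fall out exactly as stated.
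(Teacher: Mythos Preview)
Your proposal is correct and mirrors the paper's own proof: both bound the drift $d(p,p^{(i)})\le\sum_{k=1}^{i}2^{k}\le 2\cdot 2^{i}$, apply the triangle inequality to trap $d(p^{(i)},q^{(i)})$ within $4\cdot 2^{i}$ of $d(p,q)$, and then compare with the threshold $\gamma\cdot 2^{i}$ to obtain the two endpoints. The only cosmetic difference is that the paper opens by citing a monotonicity result (cross edges persist at higher levels) from Gao et al.\ to justify speaking of \emph{the} smallest such $i$, which you do not need for the interval bound itself.
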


\begin{proof}
	Lemma 4.1 in the paper of Gao et al. \cite{def-spanner-and-app} implies that if $(p^{(i)}, q^{(i)})$ is a cross edge, then $(p^{(j)}, q^{(j)})$ is a cross edge for all $j\ge i$. Then $i$ is the unique integer for which $(p^{(i)}, q^{(i)})$ is a cross edge and $(p^{(i-1)}, q^{(i-1)})$ is not. Observe that for $j<\log\left( \frac{d(p, q)}{\gamma + 4} \right)$, we have $d(p, q) > (\gamma + 4)\cdot2^j$.
	By definition of the net tree, $d(p^{(j)}, p) \le \sum_{k=1}^{j}2^k \leq 2\cdot2^j$. By the triangle inequality we have $d(p, q) \le d(p^{(j)}, q^{(j)}) + d(p,p^{(j)}) + d(q, q^{(j)}) \leq d(p^{(j)}, q^{(j)}) + 4\cdot2^j$ and so $d(p^{(j)}, q^{(j)}) \ge d(p, q) - 4\cdot2^j > \gamma\cdot2^j.$ Then by the construction of the spanner, there is no cross edge between $p^{(j)}$  and $q^{(j)}$. A similar argument shows that for $j > \left\lceil \log\left(\frac{d(p, q)}{\gamma - 4}\right)  \right\rceil$, $d(p^{(j)}, q^{(j)}) \le \gamma \cdot 2^j$ and so there is guaranteed to be a cross edge between $p^{(j)}$ and $q^{(j)}$. Then since $(p^{(j)}, q^{(j)})$ is not a cross edge for $j=\left\lfloor \log\left(\frac{d(p, q)}{\gamma + 4}\right) \right\rfloor-1$ and is guaranteed to be a cross edge for $j=\left\lceil \log\left(\frac{d(p, q)}{\gamma - 4}\right) \right\rceil$, the lemma follows.
\end{proof}

Note that the length of the interval computed in Lemma \ref{interval-lemma} is $\log\left( \frac{\gamma+4}{\gamma-4} \right) = O(1)$. This fact is needed to establish a bound on the diameter of the routing algorithm we present in the next section.

\subsection{Labelling Scheme and Routing Algorithm}
Gottlieb and Roddity \cite{improved-algorithms-for-fully-dynamic} show it is possible to route in the net tree in their construction while taking care to use the first available cross edge. This algorithm has a routing ratio of $1+\epsilon$ and the required labels are short as a result of the degree bound. We apply their method to the version of the spanner have described although we make use of our 1-spanner routing algorithm whenever the message passes through short subtrees so as to take advantage of the available shortcuts.
We first describe the labels of $\mathcal{T}^{*}$. Since each point in $p\in M$ is the representative of multiple nodes in $\mathcal{T}^{*}$, the label of $p$ will contain the labels of all nodes $v\in V(\mathcal{T}^{*})$ for which $p=rep(v)$. Note that there is a constant number of such nodes in $\mathcal{T}^{*}$ for each $p\in M$.

Initially, each node $v\in V(\mathcal{T}^{*})$ is labelled with the interval $[L(v), rank(v)]$, the same labelling scheme used in our 1-spanner routing algorithm. Next, for each light subtree $\mathcal{T}'$ of $\mathcal{T}$, we label the vertices of $\mathcal{T}'$ with the labels required for the 1-spanner routing algorithm on $\mathcal{T}'$.

When routing from a point $p$ to a point $q$ in $M$, in order to obtain a $(1+\epsilon)$ routing ratio, the algorithm must make use of the first cross edge connecting ancestors of $p$ and $q$. Since we make use of shortcuts, there is a danger the algorithm may `overshoot' the correct level. Therefore, the algorithm requires a means to estimate the level of the tree containing a viable cross edge, i.e., the algorithm should be able to locally compute the interval of Lemma \ref{interval-lemma}. If $M$ is a Euclidean metric, the distance $d(p, q)$ can be computed if the labels of the points store their coordinates. However, if $M$ is an arbitrary doubling metric, it is not obvious how to achieve this goal. For $\delta > 0$, a $(1+\delta)$-ADLS is a labeling of points in a metric space that allows approximation of pairwise distances to within a factor of $1+\delta$. That is, given the labels of points $x$ and $y$ in a $(1+\delta)$-ADLS, we are able to compute a value $\tilde{d}(x, y)$ such that $$d(x, y) \le \tilde{d}(x, y) \le (1+\delta)d(x, y).$$ Har-Peled and Mendel \cite{fast-construction-of-nets} give a $(1+\delta)$-ADLS for doubling metrics in which each label consists of at most $\min\{\delta^{-O(\dim(M))}\cdot\log D, \delta^{-O(\dim(M))}\cdot \log n(\log n + \log\log D)\}$ bits. If we work with a $(1+\delta)$-ADLS rather than exact distances, the interval of Lemma~\ref{interval-lemma} becomes $$\left[\left\lfloor \log\left(\frac{\tilde{d}(p, q)}{(\gamma + 4)(1+\delta)}\right) \right\rfloor, \left\lceil \log\left(\frac{\tilde{d}(p, q)}{(\gamma - 4)}\right) \right\rceil \right].$$ Note that the length of this interval is still $O(1)$.

To summarize, the label of each point $p\in M$ contains:
\begin{itemize}
	\item The label of $p$ in a $(1+\delta)$-ADLS for $M$.
	\item The label of each node $v\in V(\mathcal{T}^{*})$ for which $p=rep(v)$.
\end{itemize}
Since the degree of the spanner is bounded by $O(\epsilon^{-dim(M)})$, we obtain the following bound on the storage of our labelling scheme.
\begin{lemma}\label{size-of-doubling-spanner-labels}
	In the labelling scheme for the spanner $H$ defined above, each label requires at most $O(\epsilon^{-dim(M)}\log n) + \eta$ bits of storage, where $\eta = \min\{\delta^{-O(dim(M))}\cdot\log D,  \delta^{-O(dim(M))}\cdot \log n (\log n + \log\log D)\}$.
\end{lemma}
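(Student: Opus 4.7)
The plan is to decompose the label of a point $p$ into the two components listed in the labelling scheme summary—the $(1+\delta)$-ADLS label and the collection of labels of all net-tree nodes $v$ with $rep(v)=p$—bound each separately, and then account for the per-node information required so that the local routing algorithm can choose an outgoing edge at $v$.

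For the ADLS contribution I would simply invoke the Har-Peled--Mendel construction cited immediately before the lemma, which stores $\min\{\delta^{-O(\dim(M))}\log D,\ \delta^{-O(\dim(M))}\log n(\log n+\log\log D)\}=\eta$ bits per point. For the node-label contribution the first step is to use the Gottlieb--Roddity reassignment of representatives, which guarantees that $|\{v\in V(\mathcal{T}^{*}):rep(v)=p\}|=O(1)$ (with the constant depending only on $\dim(M)$, not on $n$ or $\epsilon$), so it suffices to bound the storage attached to a single such $v$. Each $v$ carries its Santoro--Khatib interval $[L(v),rank(v)]$, which is a pair of integers bounded by the total size of $\mathcal{T}^{*}$ and so fits in $O(\log n)$ bits once lower-order $\log\log D$ terms are absorbed. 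When $v$ additionally lies in a light subtree, it holds the labels needed for the 1-spanner routing algorithm on that subtree; since the light subtree is a subtree of the net tree it has constant maximum degree $\Delta$ (constant doubling dimension), and the 1-spanner parameter $k$ is a constant, so Lemma \ref{small-data} gives only $O((\Delta+k)\log n)=O(\log n)$ bits here.

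The factor $\epsilon^{-\dim(M)}$ enters through the cross-edge neighbourhood of $v$ in $\mathcal{T}^{*}$: to make a local routing decision at $v$, the algorithm must have access to the labels of all neighbours of $v$ in $H$, and $H$ has degree $O(\epsilon^{-\dim(M)})$. Storing one $O(\log n)$-bit interval per neighbour contributes $O(\epsilon^{-\dim(M)}\log n)$ bits, and summing this with the $O(1)\cdot O(\log n)$ bits of interval/1-spanner labels and the $\eta$ bits of ADLS information yields the claimed total of $O(\epsilon^{-\dim(M)}\log n)+\eta$. The main subtlety I expect is bookkeeping: the net tree itself has constant (but $\dim(M)$-dependent) degree, while the cross edges contribute a separate $\gamma^{\dim(M)}=O(\epsilon^{-\dim(M)})$ degree, and one must be careful not to conflate these two constants when applying Lemma \ref{small-data} to the light subtrees, since doing so would spuriously multiply the 1-spanner label size by $\epsilon^{-\dim(M)}$ and muddy the additive form of the final bound.
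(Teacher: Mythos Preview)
Your proposal is correct and follows the same approach as the paper, which in fact does not give a standalone proof but simply precedes the lemma with the remark that the $O(\epsilon^{-\dim(M)})$ degree bound on $H$ yields the stated storage bound. Your write-up is a faithful expansion of that one-line justification: the $\eta$ term comes from the Har-Peled--Mendel ADLS label, the Gottlieb--Roddity reassignment bounds the number of net-tree nodes represented by each point by a constant (so $|V(\mathcal{T}^*)|=O(n)$ and each Santoro--Khatib interval fits in $O(\log n)$ bits with no $\log\log D$ overhead needed), Lemma~\ref{small-data} with constant $\Delta$ and $k$ handles the light-subtree labels, and storing the $O(\log n)$-bit intervals of the $O(\epsilon^{-\dim(M)})$ neighbours in $H$ gives the dominant $O(\epsilon^{-\dim(M)}\log n)$ term.
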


We now describe the routing algorithm for the spanner $H$. Let $p$ be the current vertex and $q$ be the destination. Since each point of $H$ is the representative of multiple vertices in $\mathcal{T}^{*}$, the message header stores a $O(\log n)$ bit variable to keep track of the current position in $\mathcal{T}^{*}$. Suppose the current position is a vertex $v\in \mathcal{T}^{*}$ with representative $p$. When we say the algorithm routes to a neighbour $w$ of $v$, we mean the message is forwarded to a neighbour $p'$ of $p$ in $H$ such that $rep(w)=p'$. Throughout the following, $u$ will denote the current position of the algorithm in $\mathcal{T}^{*}$ and $v$ will denote the leaf in $\mathcal{T}^{*}$ corresponding to the destination vertex in~$H$.\\
The algorithm operates in the following distinct states.

\textbf{Initialization:} Using the ADLS labels stored in the labels of $p$ and $q$, a $(1+\delta)$-approximation of $d(p, q)$ is computed. Let $\tilde{d}(p, q)$ denote this approximation. The integer $i:=\left\lfloor \log\left(\frac{\tilde{d}(p, q)}{(1 + \delta)(\gamma + 4)}\right) \right\rfloor$ is stored in the message header as the cross edge target level.\\

\textbf{Ascending: } In this state, the current position $u$ of the message in $\mathcal{T}^{*}$ is at a level lower than $i$. Suppose $u$ is in a light subtree. Then we use our 1-spanner routing algorithm to route towards the level-$i$ ancestor of $u$. Otherwise, we simply route to the parent of $u$.\\

\textbf{Searching for a cross edge: } In this state, $u$ is at level $i$ or higher in $\mathcal{T}^{*}$. If $u$ is connected by a cross edge to an ancestor of $v$, the algorithm routes to this ancestor. Otherwise, the algorithm routes to the parent of $u$.\\

\textbf{Descending: } In this state, $u$ is an ancestor of $v$ in $\mathcal{T}^{*}$. If $u$ is in a light subtree, the algorithm routes towards $v$ using our 1-spanner routing algorithm. Otherwise, the algorithm routes to the child of $u$ which is an ancestor of $v$.

\begin{theorem}
	The routing ratio of the algorithm defined above has routing ratio $1+\epsilon$ and diameter $O(\log n)$.
\end{theorem}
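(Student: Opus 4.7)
The plan is to prove the two claims separately: the routing ratio will follow from Theorem~\ref{structure-of-spanning-paths} once we identify the exact path the algorithm traces, and the diameter bound will follow by combining the $O(\log n)$ hop bound on our 1-spanner routing (Theorem~\ref{main-theorem}) with the $O(\log n)$ diameter of the net tree spanner of Chan et al.

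For the routing ratio, I first argue that when routing from a leaf $p$ to a leaf $q$, the algorithm traces exactly a path of the form ``climb from $p$ to $p^{(i^*)}$ in $\mathcal{T}^*$, take the cross edge to $q^{(i^*)}$, then descend to $q$,'' where $i^*$ is the smallest level at which the ancestors of $p$ and $q$ are joined by a cross edge. The critical step is to verify that the algorithm does not skip past level $i^*$ during the ascending phase, i.e.\ that the target level $i$ computed in initialization satisfies $i \le i^*$. This follows from $\tilde{d}(p,q) \le (1+\delta)\,d(p,q)$, which gives
$$i \;=\; \left\lfloor \log\left(\frac{\tilde{d}(p,q)}{(1+\delta)(\gamma+4)}\right)\right\rfloor \;\le\; \left\lfloor \log\left(\frac{d(p,q)}{\gamma+4}\right)\right\rfloor \;\le\; i^*,$$
with the last inequality coming from Lemma~\ref{interval-lemma}. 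During ascending and descending, whenever the message enters a light subtree the 1-spanner routing algorithm of the previous section is invoked toward the relevant target; by Theorem~\ref{routing-ratio} this preserves tree distances exactly, so the aggregate weight of the algorithm's ascending path equals $\sum_{j=0}^{i^*-1} d(p^{(j)},p^{(j+1)})$, and symmetrically for the descent. Adding the weight of the single cross edge, the total weight traversed matches the path of Theorem~\ref{structure-of-spanning-paths}, which is at most $(1+\epsilon)\,d(p,q)$.

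For the diameter, the searching-for-a-cross-edge phase contributes only $O(1)$ hops, since the ADLS-adjusted interval of Lemma~\ref{interval-lemma} has constant length $\log\!\left(\tfrac{(\gamma+4)(1+\delta)}{\gamma-4}\right) = O(1)$. The ascending and descending phases each contribute $O(\log n)$ hops: each visit to a light subtree uses $O(\log n)$ edges by Theorem~\ref{main-theorem}, and the non-light portion of $\mathcal{T}^*$ is traversed one level at a time but contributes only $O(\log n)$ edges in total because Chan et al.\ show that the net tree equipped with these shortcuts has overall diameter $O(\log n)$.

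The main obstacle is the bookkeeping of transitions between light and non-light portions of $\mathcal{T}^*$. When the level-$i$ target ancestor of $u$ lies outside the current light subtree, I need to argue that the 1-spanner subroutine correctly routes to the root of the subtree so that the message can exit to its parent in $\mathcal{T}^*$ and continue climbing; a symmetric issue arises while descending, where the message must enter a light subtree and route to a specific descendant. Establishing that these transitions neither overshoot the level $i^*$ of the first cross edge nor inflate the hop count beyond $O(\log n)$ is the technical crux.
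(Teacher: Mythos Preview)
Your proposal is correct and follows essentially the same two-part strategy as the paper: identify the routed path with the canonical climb--cross-edge--descend path of Theorem~\ref{structure-of-spanning-paths} (noting that the 1-spanner shortcuts never increase length), and then bound the hop count state by state. You are in fact more explicit than the paper in verifying $i \le i^{*}$ from the ADLS guarantee $\tilde d(p,q)\le (1+\delta)d(p,q)$ together with Lemma~\ref{interval-lemma}; the paper leaves this implicit.

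One point deserves tightening. For the diameter of the non-light portion you invoke the $O(\log n)$ diameter of Chan et al.'s spanner. That is the conclusion being proved, and in any case their diameter bound already relies on the fact you need. The clean argument, which the paper uses, is a direct level count: light subtrees have height $\log(D/n)$ and the full net tree has height $\log D$, so there are exactly $\log D - \log(D/n) = \log n$ levels above the light subtree. Since outside a light subtree the algorithm moves one level at a time (to the parent when ascending, to the appropriate child when descending), this immediately gives $O(\log n)$ hops in the non-light portion. With this substitution your argument is complete; the transition issues you flag as ``the main obstacle'' are real but minor, since each ascent or descent passes through exactly one light subtree and the 1-spanner subroutine is invoked with a target that is a vertex of that subtree (either the level-$i$ ancestor or the subtree root, whichever is lower, on the way up; the appropriate leaf or descendant on the way down), so Theorem~\ref{main-theorem} applies directly and no overshoot can occur below level~$i$.
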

\begin{proof}
	Observe that the path traversed when routing from $p$ to $q$ by the algorithm differs from the $(1+\epsilon)$ spanning path described in Theorem \ref{structure-of-spanning-paths} only by the shortcuts taken on sections of the path which pass through light subtrees. Since these shortcuts cannot increase the length of the path, the first statement of the theorem follows.
	
	To prove the second statement, we bound the number of vertices visited in each state. In the ascending state, at most $O(\log n)$ vertices are visited in a light subtree since the 1-spanner routing algorithm has $O(\log n)$ diameter. Since there are at most $O(\log n)$ levels of the net tree higher than level $\log(\frac{D}{n})$, at most $O(\log n)$ vertices are visited by the algorithm outside a light subtree. The same arguments hold for the descending state.
	Note that the level at which to begin the search for a cross edge determined in the initialization state is at most a constant number of levels lower that the lower endpoint of the interval of Lemma~\ref{interval-lemma}. Since the length of the interval in Lemma \ref{interval-lemma} has constant length, the second statement of the theorem follows.
\end{proof}

\section{Conclusion}
We have shown that a simplified version of the logarithmic diameter tree metric 1-spanner of Solomon and Elkin \cite{tree-spanner} supports a local routing algorithm of routing ratio 1 and logarithmic diameter. We have also shown that this algorithm can be used to route in a spanner for doubling metrics which uses the shortcutting scheme to achieve a low diameter while maintaining low weight and low degree.

\bibliographystyle{plain}
\bibliography{references}
\end{document}